% Riduzione di dimensione
% ------------------------------------------------------------------------
\documentclass[a4paper,10pt,reqno]{amsart}
\usepackage[centertags]{amsmath}
\usepackage{amsfonts,amssymb,amsthm}
\usepackage{float,longtable}
\usepackage{graphicx}
\usepackage{a4wide}
\usepackage[latin1]{inputenc}
\usepackage{newlfont}
%\usepackage[body={14cm, 23cm}]{geometry} %modifica layout
%\linespread{1.6}                  %distanza interlinea
%\usepackage[scrtime]{prelim2e}  %dice quando il documento �stato modificato
%\usepackage[notref,notcite]{showkeys} %not... disattiva i labels quando sono solo citati
%\flushbottom                    %allunga se possibile la pagina in modo da occupare
                                 %tutto lo spazio disponibile
% THEOREM Environments ---------------------------------------------------
 \theoremstyle{plain}
 \begingroup
 \newtheorem{thm}{Theorem}[section]

 \endgroup

 \theoremstyle{definition}
 \begingroup

 \endgroup

 \theoremstyle{remark}
 \begingroup
 \newtheorem{rem}[thm]{Remark}
 \endgroup

 \numberwithin{equation}{section}

%%%SIMBOLI MATEMATICI

%Lettere calligrafiche

%lettere in neretto

%Insiemi numerici

%Matrici

%Spazi metrici

%Aperti

%Convergenze

%Definizioni ad hoc

\newcommand{\norm}[1]{\left\Vert#1\right\Vert}
\newcommand{\snorm}[1]{\left|#1\right|}

%\newcommand{\}{\negmedspace = \negmedspace}
%\newcommand{\inn}{\negthinspace \in \negthinspace}

%Operatori

%Spazi funzionali

%Funzionali di energia

%Salto modificato

%Cracks

%%%%%%%%%%%%%%%%%%%%%%%%%%%%%%%%%%%%%%%%%%%%%%%%%%%%%%%%%%%%%%%%%%%%%%%%%%%%%%%%%%%%%%%%%
%%%%%%%%%%%%%%%%%%%%%%%%%%%%%%%%CORPO PRINCIPALE%%%%%%%%%%%%%%%%%%%%%%%%%%%%%%%%%%%%%%%%%
%%%%%%%%%%%%%%%%%%%%%%%%%%%%%%%%%%%%%%%%%%%%%%%%%%%%%%%%%%%%%%%%%%%%%%%%%%%%%%%%%%%%%%%%%

\title[3D-1D asymptotic analysis for thin curved domains in nonlinear elasticity]
{The nonlinear bending-torsion theory for curved rods as
$\Gamma$-limit of three-dimensional elasticity }
\author[]{LUCIA SCARDIA}
\address[]{S.I.S.S.A., Via Beirut 2-4, 34014, Trieste, Italy}
\email[]{scardia@sissa.it}

\begin{document}
\maketitle
%%----------------------------------------------------------------------------------------------------------------------
\begin{center}
\begin{minipage}{12cm}
\footnotesize{ \noindent {\bf Abstract.}
The problem of the rigorous derivation of one-dimensional models for nonlinearly elastic curved beams is studied in a variational setting. Considering different scalings of the three-dimensional energy and 
passing to the limit as the diameter of the beam goes to zero, a nonlinear model for strings and a bending-torsion theory for rods are deduced.

\vspace{15pt}
\noindent {\bf Keywords:} dimension reduction, curved beams, nonlinear elasticity

\vspace{6pt}
\noindent {\bf 2000 Mathematics Subject Classification:} 74K10, 49J45}
\end{minipage}
\end{center}

\bigskip

%%%%%%%%%%%%%%%%%%%%%%%%%%%%%%%%%%%%%%%%%%%%%%%%%%%%%%%%%%%%%%%%%%%%%%%%%%%%%%%%%%%%%%%%%%%%
%%%%%%%%%%%%%%%%%%%%%%%%%%%%%%%%    INTRODUZIONE    %%%%%%%%%%%%%%%%%%%%%%%%%%%%%%%%%%%%%%%%
%%%%%%%%%%%%%%%%%%%%%%%%%%%%%%%%%%%%%%%%%%%%%%%%%%%%%%%%%%%%%%%%%%%%%%%%%%%%%%%%%%%%%%%%%%%%
\section{Introduction}

This paper is part of a series of recent works concerning the rigorous derivation of lower 
dimensional models for thin domains from nonlinear three-dimensional elasticity, 
by means of $\Gamma-$~convergence. 

The first result in this direction is due to E. Acerbi, G. Buttazzo and D. Percivale (see \cite{ABP91}), who
deduced a nonlinear model for elastic strings by means of a 3D-1D dimension reduction. 
The two-dimensional analogue was studied by H. Le Dret and A. Raoult in \cite{LDR95}, where they derived a nonlinear model for elastic membranes. The more delicate case of plates was justified more recently by G. Friesecke, R.D. James and S. M\"uller in \cite{FJM02} (see also \cite{FJMP} for a complete survey on plate theories). The case of shells was considered in \cite{LDR00} and \cite{FJMM03}.

As for one-dimensional models, nonlinear theories for elastic rods have been deduced by M.G. Mora, S. M\"uller (see \cite{MM03}, \cite{MGMM04}) and, independently, by O. Pantz (see \cite{P02}). 
In all these results, as in \cite{ABP91}, the beam is assumed to be straight in the unstressed configuration.

In this paper we study the case of a heterogeneous curved beam made of a hyperelastic material. 
Let $\Omega:=(0,L)\times D$, where $L>0$ and $D$ is a bounded domain in $\mathbb{R}^2$, and let $h>0$.
We consider a beam, whose reference configuration is given by
$$\widetilde{\Omega}_{h}:= \{\gamma(s) + h\,\xi\,\nu_{2}(s) + h\,\zeta\, \nu_{3}(s) : (s,\xi,\zeta)\in \Omega\},$$
where $\gamma:(0,L)\to \mathbb{R}^{3}$ is a smooth simple curve describing the mid-fiber of the beam, and $\nu_2, \nu_3:(0,L)\to \mathbb{R}^{3}$ are two smooth vectors such that $(\gamma',\nu_2, \nu_3)$ provide an orthonormal frame along the curve. In particular, the cross section of the beam is constant along $\gamma$ and is given by the set $hD$.
It is natural to parametrize $\widetilde{\Omega}_{h}$ through the map 
\begin{equation*}
\Psi^{(h)} :  \Omega \rightarrow \widetilde{\Omega}_{h}, \quad (s,\xi,\zeta)\mapsto \gamma(s) + h\,\xi\,\nu_{2}(s) + h\,\zeta\,\nu_{3}(s),
\end{equation*}
which is one-to-one for $h$ small enough.

The starting point of our approach is the elastic energy per unit volume
\begin{equation*}\label{puv}
\tilde{I}^{(h)}(\tilde{v}):= \frac{1}{h^2}\int_{\widetilde{\Omega}_{h}}
W\big(\big(\Psi^{(h)}\big)^{-1} (x),\nabla \tilde{v}(x)\big) dx
\end{equation*}
of a deformation $\tilde{v} \in W^{1,2}(\widetilde{\Omega}_{h};\mathbb{R}^{3})$. The stored energy density $W:\Omega\times {\mathbb M}^{3\times 3}\to [0,+\infty]$ has to satisfy some natural conditions; i.e.,
\smallskip
\begin{itemize}
\item $W$ is frame indifferent: 
$W(z,RF) = W(z,F)$ for a.e.\ 
$z\,\in \Omega$, every $F\in \mathbb{M}^{3\times 3}$, and every $R\in SO(3)$;
\smallskip
\item $W(z,F)\geq C\,\mbox{dist}^{2}(F,SO(3))$  for a.e.\  $z\in \Omega$ and
every $F\in\mathbb{M}^{3\times 3}$;
\smallskip
\item $W(z,R)=0$ for a.e.\ $z\in\Omega$ and every $R\in SO(3)$. 
\end{itemize}
\smallskip
For the complete list of assumptions on $W$ we refer to Section~2.

The aim of this work is to study the asymptotic behaviour of different scalings of the energy $\tilde{I}^{(h)}$, as $h\rightarrow 0$, by means of $\Gamma$-convergence  (see \cite{DM93} for a comprehensive
introduction to $\Gamma$-convergence). 
Heuristic arguments suggest that, as in the case of straight beams, energies of order $1$ correspond to stretching and shearing deformations, leading to a \textit{string theory}, while energies of order $h^2$ correspond to bending flexures and torsions keeping the mid-fiber unextended, leading to a \textit{rod theory}.

The main results of the paper are contained in Section $3$, where we identify the $\Gamma$-limit of the sequence of functionals $\big(\tilde{I}^{(h)}/h^2\big)$. We first show a compactness result for sequences of deformations having equibounded energies (Theorem \ref{compactness}). More precisely, given a sequence $\big(\tilde{v}^{(h)}\big)\subset W^{1,2}(\widetilde{\Omega}_{h};\mathbb{R}^{3})$ with $\tilde{I}^{(h)}(\tilde{v}^{(h)})/h^2\leq C$, we prove that there exist a subsequence (not relabelled) and some
constants $c^{(h)}\in\mathbb{R}^3$  such that 
\begin{align*}
\tilde{v}^{(h)}\circ \Psi^{(h)} -
c^{(h)} &\rightarrow v \quad \mbox{strongly in}\,\,
W^{1,2}(\Omega;\mathbb{R}^{3}), 
\\
\frac{1}{h}\,\partial_\xi\big(\tilde{v}^{(h)}\circ
\Psi^{(h)}\big)&\,\rightarrow d_2 \quad \mbox{strongly in}\,
L^{2}(\Omega;\mathbb{R}^{3}),
\\
\frac{1}{h}\,\partial_\zeta\big(\tilde{v}^{(h)}\circ
\Psi^{(h)}\big)\,&\rightarrow d_3 \quad \mbox{strongly in}\,
L^{2}(\Omega;\mathbb{R}^{3}),
\end{align*}
where $(v,d_{2},d_{3})$ belongs to the class 
\begin{eqnarray*}
\mathcal{A}:= \{(v,d_{2},d_{3}) \in W^{2,2}((0, L);\mathbb{R}^{3})\times W^{1,2}((0, L);\mathbb{R}^{3})\times 
W^{1,2}((0, L);\mathbb{R}^{3}):\nonumber\\
(v'(s)\,|\,d_{2}(s)\,|\,d_{3}(s))\in SO(3)\,\,\mbox{for a.e.\ } s \mbox{ in}\,\,  (0, L)\}.
\end{eqnarray*} 
The key ingredient in the proof is a geometric rigidity theorem proved by G. Friesecke, R.D. James and S. M\"uller in \cite{FJM02}.
In Theorems \ref{scithm} and \ref{bfa} we show that the $\Gamma$-limit of the sequence $\big(\tilde{I}^{(h)}/h^2\big)$ is given by
\begin{equation}\label{introd}
I(v,d_2,d_3):=
\left\{
\vspace{.7cm}
\begin{array}{ll}
\displaystyle\frac{1}{2}\int_{0}^{L} Q_{2} \big(s,\big(R^{T}(s)R'(s) -
R_{0}^{T}(s)R_{0}'(s)\big)\big)ds  & \mbox{if } \, (v,d_2,d_3)\in \mathcal{A},\\
\displaystyle + \infty & \mbox{otherwise}, 
\end{array}
\right.
\end{equation}
where $R := (v'\,|\,d_2\,|\,d_3)$, $R_{0}: = (\gamma'\,|\,\nu_2\,|\,\nu_3)$,  and $Q_{2}$ is a quadratic form arising from a minimization procedure involving the quadratic form of linearized elasticity (see (\ref{Q2})).
We point out that in Theorems \ref{scithm} and \ref{bfa} we do not require any growth condition from above on the energy density~$W$.

We notice that in the limit problem the behaviour of the rod is described by a triple $(v,d_2,d_3)$. The function $v$ represents the deformation of the mid-fiber, which satisfies $|v'|=1$ a.e., because of the constraint $(v'\,|\,d_{2}\,|\,d_{3})\in SO(3)$ a.e.. Therefore, the admissible deformations are only those leaving the mid-fiber unextended. Moreover, the triple $(v,d_2,d_3)$ provides an orthonormal frame along the deformed curve; in particular, $d_2$ and $d_3$ belong to the normal plane to the deformed curve and describe the rotation undergone by the cross section.

Since $R = (v'\,|\,d_2\,|\,d_3)$ is a rotation a.e., the matrix $R^TR'$ is skew-symmetric a.e.\ and its entries are given by 
$$
(R^TR')_{1k}=-(R^TR')_{k1}=v'\cdot d_k' \quad \hbox{for }k=2,3,\quad
(R^TR')_{23}=-(R^TR')_{32}=d_2\cdot d_{3}'.
$$
It is easy to see that the scalar products $v'\cdot d_k'$ are related to curvature and therefore, to bending effects, while $d_2\cdot d_{3}'$ is related to torsion and twist.
We remark also that the energy depends explicitly on the reference state of the beam through the quantity $R_0^{T}R'_0$, which encodes informations about the bending and torsion of the beam in the initial configuration.

We notice that, specifying $R_{0} = Id$ in (\ref{introd}), we recover the result for straight rods obtained in \cite{MM03} and \cite{P02}.

The last section of the paper is devoted to the study of lower scalings of the energy. Assuming that 
the energy density $W$ satisfies a growth condition from above, we prove the $\Gamma$-convergence of the sequence $\big(\tilde{I}^{(h)}\big)$  to a functional corresponding to a string model.  Finally we show that intermediate scalings of the energy between $1$ and $h^2$ lead to a trivial $\Gamma$-limit.

%%%%%%%%%%%%%%%%%%%%%%%%%%%%%%%%%%%%%%%%%%%%%%%%%%%%%%%%%%%%%%%%%%%%%%%%%%%%%%%%%%%%%%%%
%%%%%%%%%%%%%%%%%%%%%%%%%%%%%%%%%    NOTAZIONI E PRELIMINARI   %%%%%%%%%%%%%%%%%%%%%%%%%
%%%%%%%%%%%%%%%%%%%%%%%%%%%%%%%%%%%%%%%%%%%%%%%%%%%%%%%%%%%%%%%%%%%%%%%%%%%%%%%%%%%%%%%%
\section{Notations and formulation of the problem}

In this section we describe the geometry of the unstressed curved beam. 
Let $\gamma : [0, L] \rightarrow \mathbb{R}^{3}$ be a simple regular curve of class $C^{2}$ parametrized by the arc-length and let 
$\tau = \dot{\gamma}$ be its unitary tangent vector. We assume that there exists an orthonormal frame of class $C^{1}$ along the curve. More precisely, we assume that there exists $R_{0}\in C^{1}([0, L]; \mathbb{M}^{3\times 3})$ such that $R_{0}(s)\in SO(3)$ for every $s \in [0, L]$ and $R_{0}(s)\,e_{1} = \tau(s)$ for every $s \in [0, L]$, where $e_i$, for $i=1,2,3$, denotes the i-th vector of the canonical basis of $\mathbb{R}^{3}$ and \mbox{$SO(3) = \big\{R\in\mathbb{M}^{3\times 3} : R^T R = Id,\, \det R = 1 \big\}$}.
We set $\nu_k (s) := R_{0}(s)\,e_{k}$ for $k = 2,3$.

Let $D\subset \mathbb{R}^{2}$ be a bounded open connected set with Lipschitz boundary such that
\begin{equation}\label{dom1}
\int_{D}\xi\,\zeta\, d\xi\, d\zeta = 0
\end{equation} 
and
\begin{equation}\label{dom2}
\int_{D}\xi\,d\xi\,d\zeta = \int_{D}\zeta\,d\xi\,d\zeta = 0,
\end{equation} 
where $(\xi,\zeta)$ stands for the coordinates of a generic point of $D$. Without loss of generality, we can assume $\mathcal{L}^2(D) = 1$. We set $\Omega:= (0, L)\times D$.
\newline
The reference configuration of the thin beam is given by
$$\widetilde{\Omega}_{h}:= \{\gamma(s) + h\,\xi\,\nu_{2}(s) + h\,\zeta\, \nu_{3}(s) : (s,\xi,\zeta)\in \,\Omega\},$$
where $h$ is a small positive parameter. Clearly the curve $\gamma$ and the set $D$ represent the middle fiber and the cross section of the beam, respectively.
The set $\widetilde{\Omega}_{h}$ is parametrized by the map   
\begin{equation*}
\Psi^{(h)} : \Omega \rightarrow \widetilde{\Omega}_{h}\,: \quad (s,\xi,\zeta)\mapsto \gamma(s) + h\,\xi\,\nu_{2}(s) + h\,\zeta\,\nu_{3}(s),
\end{equation*}
which is one-to-one for $h$ small enough and of class $C^1$.

We assume that the thin beam is made of a hyperelastic material whose stored energy density 
$W : \Omega\times\mathbb{M}^{3\times 3} \rightarrow [0, + \infty]$ is a Carath\'eodory function
satisfying the following hypotheses:
\begin{itemize}
\vspace{.2cm}
\item[(i)] there exists $\delta>0$ such that the function $F\mapsto W(z,F)$ is of class $C^{2}$ on the set\\$\big\{F\in\mathbb{M}^{3\times 3}: \mbox{dist}(F,SO(3)) < \delta\big\}$  for a.e. $z\,\in \Omega$;
\vspace{.2cm}
\item[(ii)] the second derivative $\partial^{2}W/\partial F^{2}$ is a Carath\'eodory function on the set
\begin{equation}\label{set}
\Omega\times\{F\in \mathbb{M}^{3\times 3}:\,\mbox{dist}(F,SO(3)) < \delta \}
\end{equation}
and there exists a constant $C_{1} > 0$ such that
\begin{align*}
&\bigg|\frac{\partial^{2}W}{\partial F^{2}}(z,F)[G,G]\bigg| \leq C_{1} |\,G\,|^{2}\quad \mbox{for a.e. } z\in \Omega,\, \mbox{every} F \hbox{with} \hbox{dist}(F,SO(3))<\delta\\
& \mbox{and every } G\in \mathbb{M}^{3\times 3}_{sym};
\end{align*}
\item[(iii)] $W$ is frame indifferent, i.e., $W(z,RF) = W(z,F)$  for a.e. $z\,\in \Omega$, every $F\in \mathbb{M}^{3\times 3}$ and every $R\in SO(3)$;
\vspace{.2cm}
\item[(iv)] $W(z,R)=0$ for every  $R\in SO(3)$;
\vspace{.2cm}
\item[(v)] $\exists$ $C_{2} >\,0$ independent of $z$ such that $W(z,F)\geq C_{2}\,
\mbox{dist}^{2}(F,SO(3))$  for a.e.  $z\in \Omega$ and every  $F\in\mathbb{M}^{3\times 3}$. 
\end{itemize}
Notice that, since we do not require any growth condition from above, $W$ is allowed to assume the value $+ \infty$ outside the set (\ref{set}). Therefore our treatment covers the physically relevant case in which $W = + \infty$ for $\det F < 0$, $W\rightarrow + \infty$ as $\det F \rightarrow 0^+$.

Let $\tilde{v} \in W^{1,2}(\widetilde{\Omega}_{h};\mathbb{R}^{3})$ be a deformation of $\widetilde{\Omega}_{h}$. 
The elastic energy per unit volume associated to $\tilde{v}$ is defined by
\begin{equation}\label{energypuv}
\tilde{I}^{(h)}(\tilde{v}):= \frac{1}{h^{2}}\int_{\widetilde{\Omega}_{h}}
W\big(\big(\Psi^{(h)}\big)^{-1} (x),\nabla
\tilde{v}(x)\big) dx.
\end{equation} 
The main part of this work is devoted to the study of the asymptotic behaviour as $h\rightarrow 0$ of the sequence of functionals $\tilde{I}^{(h)}/h^{2}$.
In the final part we will also discuss the scaling $\tilde{I}^{(h)}/h^{\alpha}$ \,
for $0 \leq \alpha < 2$.

We conclude this section by analysing some properties of the map $\Psi^{(h)}$, which will be useful in the sequel. We will use the following notation: for any function $z\in W^{1,2}(\Omega;\mathbb{R}^3)$ we set
\begin{equation*}
\nabla_{h}z := \left(\partial_s z\,\Big|\,\frac{1}{h}\,\partial_\xi z\,\Big|\, \frac{1}{h}\,\partial_\zeta z\right).
\end{equation*}
We observe that $\nabla_{h}\Psi^{(h)}$
can be written as the sum of the rotation $R_{0}$ and a
perturbation of order $h$, that is,
$$\nabla_{h}\Psi^{(h)}(s,\xi,\zeta) = R_{0}(s) + h\,\left(\xi\,\nu'_{2}(s) + \zeta\,\nu'_{3}(s)\right)\otimes e_1. $$
From this fact it follows that, as $h\rightarrow 0$,
\begin{equation}\label{convdet}
\nabla_{h}\Psi^{(h)}(s,\xi,\zeta)\rightarrow  R_{0}(s)\quad \mbox{and}\quad\det \big(\nabla_{h}\Psi^{(h)}\big) \rightarrow \det R_{0} = 1 \,\,\,\mbox{uniformly}.
\end{equation}
This implies that for $h$ small enough
$\nabla_{h}\Psi^{(h)}$ is invertible at each
point of $\Omega$. Since the inverse of
$\nabla_{h}\Psi^{(h)}$ can be written as
\begin{equation}\label{invA}
\big(\nabla_{h}\Psi^{(h)}\big)^{-1}(s,\xi,\zeta) =
R_{0}^{T}(s) - h\,R_{0}^{T}(s)\,\big[(\xi\,\nu'_{2}(s) +
\zeta\,\nu'_{3}(s))\otimes e_1\big] R_{0}^{T}(s) + O(h^{2})
\end{equation}
with \, $O(h^{2})/h^{2}$ \, uniformly bounded, we have also that 
$\big(\nabla_{h}\Psi^{(h)}\big)^{-1}$ converges
to $R_{0}^{T}$ uniformly.
%%%%%%%%%%%%%%%%%%%%%%%%%%%%%%%%%%%%%%%%%%%%%%%%%%%%%%%%%%%%%%%%%%%%%%%%%%%%%%%%%%%%%%%%%%%%%
%%%%%%%%%%%%%%%%%%%%%%%%%%%  TRAVE CURVA ETEROGENEA %%%%%%%%%%%%%%%%%%%%%%%%%%%%%%%%%%%%%%%%%
%%%%%%%%%%%%%%%%%%%%%%%%%%%%%%%%%%%%%%%%%%%%%%%%%%%%%%%%%%%%%%%%%%%%%%%%%%%%%%%%%%%%%%%%%%%%%

\section{Derivation of the bending-torsion theory for curved rods}
\noindent
The aim of this section is the study of the asymptotic behaviour of the sequence of functionals
\begin{equation*}
\frac{1}{h^{2}}\,\tilde{I}^{(h)}(\tilde{v})= \frac{1}{h^{4}} \int_{\widetilde{\Omega}_{h}} W\big(\big(\Psi^{(h)}\big)^{-1} (x),\nabla \tilde{v}(x)\big) dx
\end{equation*}
under the assumptions (i)-(v) of Section 2. 

%%%%%%%%%%%%%%%%%%%%%%%%%%%%%%%%%%%%%%  COMPACTNESS  %%%%%%%%%%%%%%%%%%%%%%%%%%%%%%%%%%%%%%%%%%
\subsection{Compactness}
We will show a compactness result for sequences of deformations having equibounded energy \, $\tilde{I}^{(h)}/h^2$. 
A key ingredient in the proof is the following rigidity result, proved
by G. Friesecke, R.D. James and S. M\"{u}ller in \cite{FJM02}.
\begin{thm}\label{Teorigid}
Let $U$ be a bounded Lipschitz domain in $\mathbb{R}^{n}$, $n\geq
2$. Then there exists a constant $C(U)$ with the following
property: for every $u\in W^{1,2}(U;\mathbb{R}^{n})$ there is an
associated rotation $R\in SO(n)$ such that
\begin{equation*}
\norm{\nabla u - R}_{L^{2}(U)} \leq C(U)\norm{\textnormal{dist}(\nabla
u, SO(n))}_{L^{2}(U)}.
\end{equation*}
\end{thm}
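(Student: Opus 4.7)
The plan is to reduce the fully nonlinear estimate to classical linearized rigidity (Korn's inequality), by a Lipschitz truncation argument that separates the region where $\nabla u$ is close to $SO(n)$ from the ``bad set'' where it is far. The proof should proceed in three steps.

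The first step is a \emph{Lipschitz truncation}. For a parameter $\lambda > 0$ set $E_\lambda := \{ x\in U : M|\nabla u|(x) \le \lambda \}$, where $M$ denotes the Hardy--Littlewood maximal operator. Standard Calder\'on--Zygmund machinery yields a $c\lambda$-Lipschitz extension $v$ of $u|_{E_\lambda}$ to all of $U$, with $|U\setminus E_\lambda| \le C\lambda^{-2}\int_{\{|\nabla u|>\lambda/2\}}|\nabla u|^2 \, dx$. Choosing $\lambda$ at least twice the diameter of $SO(n)$, on the bad set $\{|\nabla u|>\lambda/2\}$ one has $|\nabla u|\le 2\,\mathrm{dist}(\nabla u, SO(n))$, so that both $|U\setminus E_\lambda|$ and $\norm{\nabla u-\nabla v}_{L^2(U\setminus E_\lambda)}^2$ are dominated by $\norm{\mathrm{dist}(\nabla u, SO(n))}_{L^2(U)}^2$.

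The second step is a \emph{local rigidity result for Lipschitz maps close to $SO(n)$}. On a ball $B$, one shows that if $v$ is $\lambda$-Lipschitz and $\norm{\mathrm{dist}(\nabla v, SO(n))}_{L^2(B)}$ is sufficiently small, then there is a constant rotation $R$ with $\norm{\nabla v - R}_{L^2(B)}\le C(B,\lambda)\norm{\mathrm{dist}(\nabla v, SO(n))}_{L^2(B)}$. The natural approach is contradiction and compactness: after normalization and after left-multiplying by a suitable rotation so that the best-fit rotation becomes the identity, a putative counterexample sequence $v_k$ yields, via weak $L^2$ limits of mollifications and subtraction of rigid motions, a limit $v_\infty$ satisfying $\mathrm{sym}\,\nabla v_\infty = 0$ (the Taylor expansion of $\mathrm{dist}^2(\cdot,SO(n))$ at the identity is exactly $|\mathrm{sym}(F-I)|^2$ to leading order); Korn's inequality then forces $\nabla v_\infty$ to be a constant skew matrix, contradicting the normalization.

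The third step is \emph{global assembly}: cover $U$ by a finite family of overlapping balls on each of which Step~2 applies, and use a Poincar\'e-type argument on the overlaps to show that the resulting rotations $R_B$ are pairwise close (their pointwise difference controlled by the local $L^2$ norm of $\nabla v - R_B$ on the overlap), so that they can be replaced by a single global rotation $R$ with constant depending only on the Lipschitz geometry of $U$. The main obstacle, in my view, is Step~1: linking the truncation level $\lambda$, the measure of $U\setminus E_\lambda$, and $\norm{\nabla u-\nabla v}_{L^2}$ all to the single quantity $\norm{\mathrm{dist}(\nabla u, SO(n))}_{L^2}$ without losing a factor. The key observation is that away from a bounded neighborhood of $SO(n)$ the distance function is comparable to $|F|$ itself, which is what makes the nonlinear pieces balance against the linear estimate from Korn.
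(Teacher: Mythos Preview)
The paper does not prove Theorem~\ref{Teorigid} at all: it is quoted verbatim from Friesecke, James and M\"uller \cite{FJM02} and used as a black box in the compactness argument (Theorem~\ref{compactness}). So there is no ``paper's own proof'' to compare your attempt against; the appropriate action here would have been simply to cite \cite{FJM02}.

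That said, your sketch is broadly faithful to the strategy of the original proof in \cite{FJM02}: Lipschitz truncation via the maximal function to reduce to maps with bounded gradient, a local linearized estimate based on Korn's inequality, and a covering/chaining step to pass from local to global. One difference worth noting is your Step~2: the actual argument in \cite{FJM02} is direct rather than by contradiction/compactness. Once $\nabla v$ is uniformly bounded and one has linearized $\mathrm{dist}^2(F,SO(n))$ near a fixed rotation (yielding control of $|\mathrm{sym}(F-R)|$ plus a quadratic remainder), Korn's inequality is applied \emph{directly} to bound $\nabla v - R$, with the higher-order remainder absorbed because of the a~priori $L^\infty$ bound on $\nabla v$. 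A compactness argument as you outline would have to be handled carefully to produce a constant depending only on the domain and on $\lambda$, and to justify that the limit inherits the linearized constraint; the direct route avoids these subtleties entirely.
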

\begin{rem}
The constant $C(U)$ can be chosen independent of $U$ for a family
of sets that are Bilipschitz images of a cube (with uniform
Lipschitz constants), as remarked in \cite{FJMM03}.
\end{rem}
\noindent
We introduce the class of limiting admissible deformations
\begin{eqnarray}\label{defA}
\mathcal{A}:= \{(v,d_{2},d_{3}) \in W^{2,2}((0, L);\mathbb{R}^{3})\times W^{1,2}((0, L);\mathbb{R}^{3})\times 
W^{1,2}((0, L);\mathbb{R}^{3}):\nonumber\\
(v'(s)\,|\,d_{2}(s)\,|\,d_{3}(s))\in SO(3)\,\,\mbox{for a.e. s in}\,\,  (0, L)\}.
\end{eqnarray}
\noindent
Now we are ready to state and prove the main result of this subsection.
%%%%%%%%%%%%%%%%%%%%%%%%%%%%%%%%%%%%%%%  COMPACTNESS  %%%%%%%%%%%%%%%%%%%%%%%%%%%%%% 
\begin{thm}\label{compactness}
Let $\big(\tilde{v}^{(h)}\big)$ be a sequence in
$W^{1,2}\big(\widetilde{\Omega}_{h};\mathbb{R}^{3}\big)$ such
that
\begin{equation}\label{finite}
\frac{1}{h^2}\,\tilde{I}^{(h)}(\tilde{v}^{(h)})
\leq c < +\infty.
\end{equation}
Then there exist a triple $(v,d_{2},d_{3})\in \mathcal{A}$, a map
$\overline{R}\in W^{1,2}((0, L);\mathbb{M}^{3\times 3})$ with 
$\overline{R}(s)\in SO(3)$ \\for a.e. $s\in [0, L]$, and some
constants $c^{(h)}\in\mathbb{R}^3$ such that, up to subsequences,
\begin{align}
\tilde{v}^{(h)}\circ \Psi^{(h)} -
c^{(h)} &\rightarrow v \quad \mbox{strongly in}\,\,
W^{1,2}(\Omega;\mathbb{R}^{3}), \label{teo1}\\
\frac{1}{h}\,\partial_\xi\big(\tilde{v}^{(h)}\circ
\Psi^{(h)}\big)&\,\rightarrow d_2 \quad \mbox{strongly in}\,
L^{2}(\Omega;\mathbb{R}^{3}),\label{teo2}\\
\frac{1}{h}\,\partial_\zeta\big(\tilde{v}^{(h)}\circ
\Psi^{(h)}\big)\,&\rightarrow d_3 \quad \mbox{strongly in}\,
L^{2}(\Omega;\mathbb{R}^{3}),\label{teo25}\\
\nabla\tilde{v}^{(h)}\circ
\Psi^{(h)}&\rightarrow \overline{R}\quad \mbox{strongly
in}\, L^{2}(\Omega;\mathbb{M}^{3\times 3}).\label{teo3}
\end{align}
Moreover, for a.e. $s\in [0, L]$, we have \,\, $(v'(s)\,|\,d_{2}(s)\,|\,d_{3}(s)) = \overline{R}(s)\,R_{0}(s)$, where $R_{0} = (\tau\,|\,\nu_{2}\,|\,\nu_{3})$.
\end{thm}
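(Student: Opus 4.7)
The strategy adapts the piecewise-rigidity scheme of \cite{FJM02} and \cite{MM03} to the curved reference configuration. First, I pull everything back to the fixed domain by setting $v^{(h)} := \tilde{v}^{(h)} \circ \Psi^{(h)}$ and $F^{(h)} := \nabla \tilde{v}^{(h)} \circ \Psi^{(h)}$, so that $\nabla_h v^{(h)} = F^{(h)}\, \nabla_h \Psi^{(h)}$. A change of variables in the energy, together with the identity $\det \nabla \Psi^{(h)} = h^2 \det \nabla_h \Psi^{(h)}$ and (\ref{convdet}), rewrites $\tilde{I}^{(h)}(\tilde{v}^{(h)})$ as $\int_\Omega W(z, F^{(h)}(z))\, \det \nabla_h \Psi^{(h)}(z)\, dz$. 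Combining this with hypothesis (v) and the bound (\ref{finite}) yields
$$\int_\Omega \mbox{dist}^{\,2}(F^{(h)}, SO(3))\, dz \le C h^2.$$

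Next, I partition $[0,L]$ into $N \sim L/h$ intervals $I_k = (kh,(k+1)h)$ and set $\omega_k := I_k \times D$. For $h$ small, each $\Psi^{(h)}(\omega_k)$ is a bilipschitz image of a cube of side $h$ with Lipschitz constants uniform in $k$ and $h$ (this uses (\ref{convdet}) and the $C^1$-regularity of $\gamma, \nu_2, \nu_3$). Theorem \ref{Teorigid} together with the remark after it therefore provides rotations $R_k^{(h)} \in SO(3)$ such that
$$\int_{\omega_k} |F^{(h)} - R_k^{(h)}|^2\, dz \le C \int_{\omega_k} \mbox{dist}^{\,2}(F^{(h)}, SO(3))\, dz,$$
with $C$ independent of $k, h$. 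Applying the same estimate on $\omega_k \cup \omega_{k+1}$ and comparing via the triangle inequality produces the discrete bound $|R_{k+1}^{(h)} - R_k^{(h)}|^2 \le C h^{-1} \int_{\omega_k \cup \omega_{k+1}} \mbox{dist}^{\,2}(F^{(h)}, SO(3))\, dz$.

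I then define $R^{(h)}(s) := R_k^{(h)}$ on $I_k$ and its piecewise affine interpolant $\widehat{R}^{(h)}$. The preceding bound yields $\|\partial_s \widehat{R}^{(h)}\|_{L^2}^2 = \sum_k h^{-1}\, |R_{k+1}^{(h)} - R_k^{(h)}|^2 \le C$, while $\widehat{R}^{(h)}$ is bounded in $L^\infty$. Passing to a subsequence, $\widehat{R}^{(h)} \weak \overline{R}$ in $W^{1,2}((0,L); \mathbb{M}^{3\times 3})$ and strongly in $L^2$, so $\overline{R}(s) \in SO(3)$ a.e.; since $\|R^{(h)} - \widehat{R}^{(h)}\|_{L^2} = O(h)$, also $R^{(h)} \to \overline{R}$ in $L^2$, and assembling with Step 2 gives $F^{(h)} \to \overline{R}$ strongly in $L^2(\Omega; \mathbb{M}^{3\times 3})$, which is (\ref{teo3}). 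Because $\nabla_h \Psi^{(h)} \to R_0$ uniformly by (\ref{convdet}), $\nabla_h v^{(h)} \to \overline{R} R_0$ strongly in $L^2$. The first column and a Poincar\'e argument supply constants $c^{(h)}$ and a limit $v$ with $v^{(h)} - c^{(h)} \to v$ in $W^{1,2}$; since $\partial_\xi v^{(h)}, \partial_\zeta v^{(h)} = O(h)$ in $L^2$, the limit $v$ is independent of $(\xi,\zeta)$ and $v'(s) = \overline{R}(s)\tau(s)$. The other two columns give (\ref{teo2})-(\ref{teo25}) with $d_k = \overline{R} \nu_k$, so $(v' | d_2 | d_3) = \overline{R} R_0 \in SO(3)$ a.e. Since $\overline{R} \in W^{1,2}$ and $R_0 \in C^1$, their product lies in $W^{1,2}$, giving $v \in W^{2,2}$, $d_k \in W^{1,2}$, and hence $(v,d_2,d_3) \in \mathcal{A}$.

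The most delicate point is the piecewise rigidity step: one must verify that $\Psi^{(h)}|_{\omega_k}$ is bilipschitz onto its image with constants independent of $k$ and $h$ so that the rigidity constant in Theorem \ref{Teorigid} does not degenerate, and then extract the $W^{1,2}$ bound on the interpolant $\widehat{R}^{(h)}$ by carefully tracking the powers of $h$ that appear when summing the overlap estimates. The curvature of the reference configuration enters only through the $O(h)$ perturbation $\nabla_h \Psi^{(h)} - R_0$ recorded in (\ref{convdet}), which is small enough not to affect any of the uniform constants in this argument.
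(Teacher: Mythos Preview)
Your proof is correct and follows essentially the same strategy as the paper: pull back to $\Omega$, use coercivity (v) together with the change of variables to bound $\int_\Omega \mathrm{dist}^2(F^{(h)},SO(3))$ by $Ch^2$, partition $[0,L]$ into cells of size $\sim h$, apply Theorem~\ref{Teorigid} on each $\Psi^{(h)}$-image with a uniform constant, and control the variation of the resulting piecewise constant rotation via overlapping cells.

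The one genuine difference is how the $W^{1,2}$ regularity of $\overline{R}$ is extracted. The paper iterates the overlap estimate into a difference-quotient bound of the form~(\ref{increm}), uses Fr\'echet--Kolmogorov to get $L^2$ compactness of $R^{(h)}$, and then passes to the limit in the difference quotients to conclude $\overline{R}\in W^{1,2}$. You instead build the piecewise-affine interpolant $\widehat{R}^{(h)}$, bound $\|\partial_s\widehat{R}^{(h)}\|_{L^2}$ directly from the overlap estimate, and obtain $\overline{R}\in W^{1,2}$ by weak $W^{1,2}$-compactness. Both routes are standard and yield the same conclusion; yours is arguably more direct, while the paper's difference-quotient formulation plugs more readily into the liminf argument of Theorem~\ref{scithm}.

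Two small points worth tightening. First, make explicit (as the paper does) that $\Psi^{(h)}(\omega_k)$ is the image of a translate of $h\big((0,1)\times D\big)$ under the \emph{fixed} $C^1$ map $\Psi(s,y_2,y_3)=\gamma(s)+y_2\nu_2(s)+y_3\nu_3(s)$; this is precisely what justifies the uniform rigidity constant via the remark after Theorem~\ref{Teorigid}. Second, the conclusion $\overline{R}(s)\in SO(3)$ a.e.\ should be drawn from the strong $L^2$ convergence of the $SO(3)$-valued $R^{(h)}$, not from $\widehat{R}^{(h)}$ (which is only close to $SO(3)$); your text gets there, but the order of the sentences obscures this.
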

\begin{proof}
Let $\big(\tilde{v}^{(h)}\big)$ be a sequence in
$W^{1,2}(\widetilde{\Omega}_{h};\mathbb{R}^{3})$ satisfying (\ref{finite}). 
The assumption (v) on $W$ implies that
\begin{equation*}
\int_{\widetilde{\Omega}_{h}} \textrm{dist}^{2}\big(\nabla
\tilde{v}^{(h)}(x),SO(3)\big) dx < C\,h^4
\end{equation*}
for a suitable constant $C$. Using the change of variables $\Psi^{(h)}$, we have
\begin{equation}\label{bound2}
\int_{\Omega} \mbox{dist}^{2}\big(\nabla
\tilde{v}^{(h)}\circ\Psi^{(h)},SO(3)\big) \det
\big(\nabla_{h}\Psi^{(h)}\big)
ds\,d\xi\,d\zeta \leq c\,h^{2}.
\end{equation}
From (\ref{convdet}) and the estimate
$$\mbox{dist}^{2}(F,SO(3)) \geq \frac{1}{2}\,|\,F\,|^{2} - 3,$$
we get the bound
\begin{equation}\label{uff}
\int_{\Omega}\big|\,\nabla\tilde{v}^{(h)}\circ
\Psi^{(h)}\big|^{2} ds\,d\xi\,d\zeta  \leq c.
\end{equation}
Define the sequence $F^{(h)} :=
\nabla\tilde{v}^{(h)}\circ\Psi^{(h)}$; from
(\ref{uff}) it follows that there exists a function $F \in
L^{2}(\Omega;\mathbb{M}^{3\times 3})$ such that, up to
subsequences,
\begin{equation}\label{Fconv}
F^{(h)} \rightharpoonup F \quad \textrm{weakly in} \,
L^{2}(\Omega;\mathbb{M}^{3\times 3}).
\end{equation}
Using Theorem \ref{Teorigid}, we will show that this convergence is 
in fact strong in $L^{2}$ and that the limit function $F$ is a rotation 
a.e. depending only on the variable along the mid-fiber and
belonging to $W^{1,2}((0, L);\mathbb{M}^{3\times 3})$. The idea is
to divide the domain $\widetilde{\Omega}_{h}$ in small
curved cylinders, which are images of homotetic straight cylinders 
through the same Bilipschitz function. Then, we can apply the rigidity 
theorem to each small curved cylinder with the same
constant. In this way we construct a piecewise constant rotation, 
which is close to the deformation gradient $\nabla\tilde{v}^{(h)}$ 
in the $L^{2}$ norm. For every small enough $h>0$, let $K_{h}\in
\mathbb{N}$ satisfy
$$h \leq \frac{L}{K_{h}} < 2\,h.$$
For every $a\in[0, L)\cap
\dfrac{L}{K_{h}}\,\mathbb{N}$, define the segments
$$
S_{a,K_{h}}:=
\left\{
\begin{array}{ll}
\vspace{.2cm}
(a, a + 2\,h) & \mbox{if } \,\, a< L-\dfrac{L}{K_{h}},\\
(L - 2\,h,L ) & \mbox{otherwise}.
\end{array}
\right.
$$
Now consider the cylinders $C_{a,h}:= S_{a,K_{h}}\times D$
and the subsets of $\widetilde{\Omega}_{h}$ defined by
$\widetilde{C}_{a,h}:=
\Psi^{(h)}(C_{a,h})$. Remark that 
$\widetilde{C}_{a,h}$ is a Bilipschitz image of a cube of size $h$, that is 
$(a,0,0) + h\,\big((0, 2)\times D \big)$, through  the map $\Psi$ defined as
\begin{equation*}
\Psi :  [0, L]\times\mathbb{R}^{2} \rightarrow \mathbb{R}^{3}, \quad (s,y_{2},y_{3})\mapsto \gamma(s) + y_{2}\,\nu_{2}(s) + y_{3}\,\nu_{3}(s). 
\end{equation*}
By Theorem \ref{Teorigid} we obtain that there exists a constant rotation $\widetilde{R}_{a}^{(h)}$ such
that
\begin{equation}\label{rigid}
\int_{\widetilde{C}_{a,h}}\big|\,\nabla\tilde{v}^{(h)}
- \widetilde{R}_{a}^{(h)}\big|^{2} dx \leq c
\int_{\widetilde{C}_{a,h}}\mbox{dist}^{2}(\nabla\tilde{v}^{(h)},SO(3))
dx.
\end{equation}
The subscript $a$ in $\widetilde{R}_{a}^{(h)}$ is used
to remember that the rotation depends on the cylinder
$\widetilde{C}_{a,h}$. In particular, since $\Psi^{(h)}\big(\big(a, a +
\frac{L}{K_{h}}\big)\times D\big)\subset \widetilde{C}_{a,h}$, we
get
\begin{equation}\label{rigidity}
\int_{\Psi^{(h)}\big(\big(a, a +
\frac{L}{K_{h}}\big)\times
D\big)}\big|\,\nabla\tilde{v}^{(h)} -
\widetilde{R}_{a}^{(h)}\big|^{2} dx \leq c
\int_{\widetilde{C}_{a,h}}\mbox{dist}^{2}(\nabla\tilde{v}^{(h)},SO(3))
dx.
\end{equation}
Changing variables in the integral on the left-hand side, inequality (\ref{rigidity}) becomes
\begin{align*}
\int_{\big(a, a + \frac{L}{K_{h}}\big)\times
D}&\big|\,\nabla\tilde{v}^{(h)}\circ\Psi^{(h)} -
\widetilde{R}_{a}^{(h)}\big|^{2}\det
\big(\nabla\Psi^{(h)}\big) ds\,d\xi\,d\zeta \\
&\leq
c\,\int_{\widetilde{C}_{a,h}}\mbox{dist}^{2}\big(\nabla\tilde{v}^{(h)},SO(3)\big)
dx\\
&\leq
c\,\int_{\widetilde{C}_{a,h}}W\big(\big(\Psi^{(h)}\big)^{-1}
(x),\nabla\tilde{v}^{(h)}(x)\big) dx.
\end{align*}
Notice that $\det \big(\nabla\Psi^{(h)}\big) =
h^2 \det
\big(\nabla_{h}\Psi^{(h)}\big)$ and, since
$\det
\big(\nabla_{h}\Psi^{(h)}\big)\rightarrow 1$
uniformly,
\begin{equation}\label{rigidity2}
\int_{\big(a, a + \frac{L}{K_{h}}\big)\times
D}\big|\,\nabla\tilde{v}^{(h)}\circ\Psi^{(h)} -
\widetilde{R}_{a}^{(h)}\big|^{2} ds\,d\xi\,d\zeta \leq
\frac{c}{h^2}\,\int_{\widetilde{C}_{a,h}}W\big(\big(\Psi^{(h)}\big)^{-1}
(x),\nabla\tilde{v}^{(h)}(x)\big) dx.
\end{equation}
Now define the map $R^{(h)}: [0, L)\rightarrow SO(3)$
given by
\begin{equation*}
R^{(h)}(s):= \widetilde{R}_{a}^{(h)} \quad
\mbox{for}\, s\in \Big[a, a + \frac{L}{K_{h}}\Big),\,
a\in [0, L)\cap \frac{L}{K_{h}}\,\mathbb{N}.
\end{equation*}
Summing (\ref{rigidity2}) over $a\in [0, L)\cap
\frac{L}{K_{h}}\,\mathbb{N}$ leads to
\begin{equation*}
\int_{\Omega}\big|\,\nabla\tilde{
v}^{(h)}\circ\Psi^{(h)} -
R^{(h)}\big|^{2}ds\,d\xi\,d\zeta \leq
\frac{c}{h^2}\,\int_{\widetilde{\Omega}_{h}}W\big(\big(\Psi^{(h)}\big)^{-1}
(x),\nabla\tilde{v}^{(h)}(x)\big) dx
\end{equation*}
for a suitable constant independent of $h$. By (\ref{finite}) we
obtain
\begin{equation}\label{rigid2}
\int_{\Omega}\big|\,\nabla
\tilde{v}^{(h)}\circ\Psi^{(h)} -
R^{(h)}\big|^{2}ds\,d\xi\,d\zeta \leq c\,h^2.
\end{equation}
Now, applying iteratively estimate (\ref{rigidity2}) in neighbouring cubes, 
one can prove the following difference quotient estimate for $R^{(h)}$: 
for every $I'\subset\subset [0, L]$ and every $\delta \in \mathbb{R}$ with 
$\snorm{\,\delta} \leq \mbox{dist}(I',\{0, L\})$
\begin{equation}\label{increm}
\int_{I'}\big|\,R^{(h)}(s + \delta) - R^{(h)}(s)\,\big|^{2} ds
\leq c\,(|\,\delta\,| + h)^{2},
\end{equation}
with $c$ independent of $I'$ and $\delta$ (see \cite{MM03}, proof of Theorem 2.1).
Using the Fr\'echet-Kolmogorov criterion, we deduce that, for
every sequence $(h_{j})\rightarrow 0$, there exists a
subsequence of $R^{(h_{j})}$ which converges strongly in
$L^{2}(I';\mathbb{M}^{3\times 3})$ to some $\overline{R}\in L^{2}(I';\mathbb{M}^{3\times 3})$, with
$\overline{R}(s)\in SO(3)$ for a.e. $s\in I'$. From (\ref{Fconv}) and (\ref{rigid2})
it follows that $F = \overline{R}$ a.e.. Moreover (\ref{convdet}) and (\ref{bound2}) 
imply the convergence of the $L^{2}$ norm of $F^{(h)}$ to the $L^{2}$ norm of 
$\overline{R}$, hence 
\begin{equation*}
F^{(h)} \rightarrow \overline{R}  \quad \mbox{strongly
in} \,L^{2}(\Omega;\mathbb{M}^{3\times 3}).
\end{equation*}
This proves (\ref{teo3}), once the regularity of the function $\overline{R}$ is shown.
To this aim, divide both sides of the inequality (\ref{increm}) by
$(\snorm{\delta} + h)^{2}$ and let
$h\rightarrow 0$; then
\begin{equation}\label{regu}
\int_{I'}\frac{\snorm{\overline{R}(s + \delta) - \overline{R}(s)}^{2}}{\snorm{\delta}^{2}}\,ds \leq c
\end{equation}
and so $\overline{R} \in W^{1,2}(I';\mathbb{M}^{3\times 3})$. But
this holds for every $I'\subset\subset [0, L]$ with a constant
independent of the subset $I'$, hence $\overline{R} \in
W^{1,2}((0, L);\mathbb{M}^{3\times 3})$.
\newline
Now notice that
\begin{equation}\label{successione}
\nabla_{h}\big(\tilde{v}^{(h)}\circ\Psi^{(h)}\big)
= \big(\nabla\tilde{v}^{(h)}\circ\Psi^{(h)}\big)
\nabla_{h}\Psi^{(h)}=
F^{(h)}\nabla_{h}\Psi^{(h)};
\end{equation}
by (\ref{convdet}) and (\ref{teo3}) we deduce that
\begin{equation}\label{3.16bis}
\nabla_{h}\big(\tilde{v}^{(h)}\circ\Psi^{(h)}\big)\longrightarrow \overline{R}\,R_{0} \quad
\mbox{strongly in}\,\, L^{2}(\Omega;\mathbb{M}^{3\times 3}).
\end{equation}
In particular, we have
\begin{equation}\label{3.16ter}
\nabla\big(\tilde{v}^{(h)}\circ\Psi^{(h)}\big)\longrightarrow 
\big(\overline{R}\,R_{0}e_1\big)\otimes e_1 \quad
\mbox{strongly in}\,\, L^{2}(\Omega;\mathbb{M}^{3\times 3}).
\end{equation}
By Poincar\'e inequality there exist some constants $c^{(h)}\in \mathbb{R}^3$ and a function $v$ in $W^{1,2}(\Omega;\mathbb{R}^{3})$ such that (\ref{teo1}) is satisfied.
Moreover (\ref{3.16ter}) entails that the function $v$ depends only on the variable $s$ in $[0, L]$ and satisfies 
$v' = \overline{R}\,R_{0}e_1$. Setting $d_{k}:= \overline{R}\,R_{0}e_k$ for $k = 2,3$, we have that $(v, d_2, d_3)\in \mathcal{A}$ and (\ref{teo2}), (\ref{teo25}) are satisfied by (\ref{3.16bis}).
\end{proof}

%%%%%%%%%%%%%%%%%%%%%%%%%%%%%%%%%%%%%  LIMINF %%%%%%%%%%%%%%%%%%%%%%%%%%%%%%%%%%%%%%%%%%
\subsection{Bound from below}
Let $Q_{3}: \Omega\times\mathbb{M}^{3\times 3}\longrightarrow [0, +\infty)$ be twice the 
quadratic form of linearized elasticity; i.e.,
$$Q_{3}(z,G) := \frac{\partial^2 W}{\partial F^2}(z,Id)[G,G]$$
for a.e. $z\in \Omega$ and every $G\in \mathbb{M}^{3\times 3}$.
We introduce the quadratic form $Q_{2}: [0, L]\times\mathbb{M}^{3\times 3}_{\textrm{skew}}\rightarrow [0, +\infty)$
defined by
\begin{equation}\label{Q2}
Q_{2}(s,P):= \hspace{-0.4cm}\inf_{\begin{array}{c}
\vspace{-.15cm}
\scriptstyle\hat{\alpha}\in W^{1,2}(D;\mathbb{R}^{3})\\
\scriptstyle\hat{g}\in \mathbb{R}^{3}
\end{array}}\Bigg\{\int_{D}Q_{3}\bigg(s,\xi,\zeta,R_{0}(s)\Bigg(P\,
\Bigg(\begin{array}{c}
0\\
\xi\\
\zeta
\end{array}\Bigg) + \hat{g}\,\bigg|\,\partial_{\xi}\hat{\alpha}\,\bigg|\,\partial_{\zeta}\hat{\alpha}\Bigg)
R_{0}^{T}(s)\bigg)d\xi\,d\zeta\Bigg\}.
\end{equation}
\begin{rem}\label{remark}
It is easy to check that the minimum in (\ref{Q2}) is attained; 
moreover the minimizers depend linearly on $P$, hence $Q_2$ is a quadratic form of $P$. Notice also that if $P\in L^2 ((0,L);\mathbb{M}^{3\times 3})$, then $\hat{\alpha}\in L^2 (\Omega;\mathbb{R}^{3})$ with $\partial_{\xi}\hat{\alpha}, \partial_{\zeta}\hat{\alpha} \in L^2 (\Omega;\mathbb{R}^{3})$, and $\hat{g} \in L^2 ((0,L);\mathbb{R}^{3})$ (see \cite[Remarks 4.1 - 4.3]{MGMM04}).
\end{rem}
In the following theorem we prove a lower bound for the energies 
$\tilde{I}^{(h)}/h^{2}$ in terms of the functional
\begin{equation}\label{funI}
I(v,d_2,d_3):=
\left\{
\vspace{.7cm}
\begin{array}{ll}
\displaystyle\frac{1}{2}\int_{0}^{L} Q_{2} \big(s,\big(R^{T}(s)R'(s) -
R_{0}^{T}(s)R_{0}'(s)\big)\big)ds  & \mbox{if } \, (v,d_2,d_3)\in \mathcal{A},\\
\displaystyle + \infty & \mbox{otherwise }, 
\end{array}
\right.
\end{equation}
where $R\in W^{1,2}((0, L);\mathbb{M}^{3\times 3})$ denotes the matrix 
$R:= (v'\,|\,d_2\,|\,d_3)$ and $\mathcal{A}$ is the class defined in~(\ref{defA}).
%%%%%%%%%%%%%%%%%%%%%%%%%%%%%%%%%%%%%%%%%%%%%%%%%%%%%%%%%%%%%%%%%%%%%%%%%%%%%%%%%%%%%%%%%%%%%%%
\begin{thm}\label{scithm}
Let $v\in W^{1,2}(\Omega;\mathbb{R}^{3})$ and let $d_2, d_3 \in L^{2}(\Omega;\mathbb{R}^{3})$. Then, 
for every positive sequence $(h_j)$ converging to
zero and every sequence $\big(\tilde{v}^{(h_j)}\big)\subset
W^{1,2}(\widetilde{\Omega}_{h_j};\mathbb{R}^{3})$ such
that
\begin{equation}\label{teo5}
\tilde{v}^{(h_j)}\circ \Psi^{(h_j)} \rightarrow v \quad \mbox{strongly in}\,\,
W^{1,2}(\Omega;\mathbb{R}^{3}),
\end{equation}
\begin{equation}\label{teo65}
\frac{1}{h_j}\,\partial_\xi\big(\tilde{v}^{(h_j)}\circ
\Psi^{(h_j)}\big)\,\rightarrow d_2\quad
\mbox{strongly in}\,\, L^{2}(\Omega;\mathbb{R}^{3}),
\end{equation}
\begin{equation}\label{teo6}
\frac{1}{h_j}\,\partial_\zeta\big(\tilde{v}^{(h_j)}\circ
\Psi^{(h_j)}\big)\,\rightarrow d_3\quad
\mbox{strongly in}\,\, L^{2}(\Omega;\mathbb{R}^{3}),
\end{equation}
it turns out that
\begin{equation}\label{cl}
I(v,d_2,d_3)\leq\liminf_{j \rightarrow \infty}
\frac{1}{h_{j}^{4}} \int_{\widetilde{\Omega}_{h_j}}
W\big(\big(\Psi^{(h_j)}\big)^{-1} (x), \nabla
\tilde{v}^{(h_j)}(x)\big) dx.
\end{equation}
\end{thm}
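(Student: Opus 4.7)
We may suppose the liminf on the right of (\ref{cl}) is finite, and, extracting a subsequence, that it is realized as a limit. Theorem \ref{compactness} then applies (along a further subsequence): it yields a rotation field $\overline{R} \in W^{1,2}((0,L); \mathbb{M}^{3\times 3})$ with $F^{(h)} := \nabla \tilde{v}^{(h)} \circ \Psi^{(h)} \to \overline{R}$ strongly in $L^2$ and $R := (v'\,|\,d_2\,|\,d_3) = \overline{R}\,R_0$ a.e.; in particular $(v,d_2,d_3) \in \mathcal{A}$. From the proof of that theorem we also retain the piecewise-constant approximating rotations $R^{(h)}:[0,L)\to SO(3)$ satisfying $\|F^{(h)} - R^{(h)}\|_{L^2(\Omega)} \le Ch$ and $R^{(h)} \to \overline{R}$ in $L^2$; these are the workhorse of the argument.

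\textbf{Scaled strain and Taylor expansion.} Changing variables from $\widetilde{\Omega}_h$ to $\Omega$ and using frame-indifference (iii), the rescaled energy takes the form
$$\frac{1}{h^{2}} \int_\Omega W\bigl(z, (R^{(h)})^T F^{(h)}\bigr)\, \det\bigl(\nabla_h \Psi^{(h)}\bigr)\, dz.$$
Set $G^{(h)}(z) := h^{-1}\bigl((R^{(h)}(s))^T F^{(h)}(z) - Id\bigr)$; by rigidity $G^{(h)}$ is bounded in $L^2(\Omega;\mathbb{M}^{3\times 3})$, so along a subsequence $G^{(h)} \rightharpoonup G$ weakly in $L^2$. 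Truncate to $E^{(h)} := \{h\,|G^{(h)}| < \delta/2\}$ with $\delta$ from (i); since $|\Omega \setminus E^{(h)}| \to 0$, one still has $\mathbf{1}_{E^{(h)}} G^{(h)} \rightharpoonup G$. On $E^{(h)}$ the matrix $Id + hG^{(h)}$ lies in the $C^2$-smoothness domain of $W$, and (iii)--(iv) together with the minimality $W \ge 0 = W(z,Id)$ give $\partial_F W(z, Id) = 0$; a second-order Taylor expansion then yields
$$h^{-2}\, W(z, Id + hG^{(h)})\, \mathbf{1}_{E^{(h)}} = \tfrac{1}{2}\, Q_3(z, G^{(h)})\, \mathbf{1}_{E^{(h)}} + \omega^{(h)}(z)\, |G^{(h)}|^2\, \mathbf{1}_{E^{(h)}}$$
with $\omega^{(h)} \to 0$ pointwise and uniformly bounded by (ii); standard Egorov-based arguments then make the remainder vanish in $L^1$. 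Combined with $\det(\nabla_h \Psi^{(h)}) \to 1$ uniformly and the weak $L^2$-lower semicontinuity of $M \mapsto \int_\Omega Q_3(z, M)\, dz$ (convex and nonnegative by (ii) and (v)), this gives
$$\liminf_j \frac{1}{h_j^{4}} \int_{\widetilde{\Omega}_{h_j}} W\, dx \;\ge\; \tfrac{1}{2} \int_\Omega Q_3(z, G)\, dz.$$

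\textbf{Identification of $G$ and conclusion.} The decisive step is to determine the structure of $G$. Starting from the chain rule $\nabla_h(\tilde{v}^{(h)} \circ \Psi^{(h)}) = F^{(h)}\, \nabla_h \Psi^{(h)}$ and the expansion (\ref{invA}) of $(\nabla_h \Psi^{(h)})^{-1}$, and exploiting the $W^{1,2}$-regularity of $\overline{R}$ together with the $L^2$-bounds on $h^{-1}\partial_\xi(\tilde{v}^{(h)}\circ\Psi^{(h)})$ and $h^{-1}\partial_\zeta(\tilde{v}^{(h)}\circ\Psi^{(h)})$ from (\ref{teo65})--(\ref{teo6}), a column-by-column analysis---parallel to that of \cite{MGMM04} for straight rods---shows that, up to a skew-symmetric summand invisible to $Q_3$,
$$R_0(s)\, G(s,\xi,\zeta)\, R_0^T(s) \;=\; \Big(\,P(s)\,(0,\xi,\zeta)^T + \hat g(s)\;\Big|\;\partial_\xi \hat\alpha(s,\xi,\zeta)\;\Big|\;\partial_\zeta \hat\alpha(s,\xi,\zeta)\,\Big),$$
with $P(s) := R^T(s)\, R'(s) - R_0^T(s)\, R_0'(s)$ and $\hat g, \hat\alpha$ in the spaces of Remark \ref{remark}. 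The correction term $R_0^T R_0'$ appears precisely because $\nabla_h \Psi^{(h)}$ converges not to $Id$ but to $R_0$, so the first-order perturbation of $(R^{(h)})^T F^{(h)} - Id$ absorbs a tangential derivative of $R_0$. Given this structure, the defining minimization (\ref{Q2}) yields
$$\int_D Q_3\bigl(s,\xi,\zeta,\, R_0[\,\cdots\,] R_0^T\bigr)\, d\xi\, d\zeta \;\ge\; Q_2\bigl(s, P(s)\bigr)$$
for a.e.\ $s$, and integration in $s$ produces $I(v, d_2, d_3)$. The principal obstacle is this identification of $G$---tracking how the curvature data $R_0, R_0'$ enter $G^{(h)}$ through both the chain rule and the piecewise-constant $R^{(h)}$ so as to produce the ``reference curvature'' $R_0^T R_0'$ in the limit; once $G$ is understood, the Taylor expansion with truncation and weak $L^2$ lower semicontinuity are standard.
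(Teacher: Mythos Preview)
Your approach is essentially the paper's: invoke compactness to obtain $\overline R$ and the piecewise-constant $R^{(h)}$, define $G^{(h)} = h^{-1}\bigl((R^{(h)})^T F^{(h)} - Id\bigr)$, truncate and Taylor-expand to reach $\tfrac12\int_\Omega Q_3(z,G)\,dz$, then identify $G$ and apply the defining minimization for $Q_2$. Two small corrections are worth noting. First, the conjugation in your displayed formula is reversed: the paper obtains $G = R_0[\,\cdots\,]R_0^T$, i.e.\ $R_0^T G R_0 = [\,\cdots\,]$, not $R_0\,G\,R_0^T = [\,\cdots\,]$ (your subsequent inequality for $Q_2$ silently uses the correct version). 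Second, the paper truncates at $|G^{(h)}|\le h^{-1/2}$ rather than at $h|G^{(h)}|<\delta/2$; the point is that the former gives $h\,G^{(h)}\to 0$ \emph{uniformly} on the good set, which is exactly what the Scorza--Dragoni argument for the remainder needs, whereas your threshold only gives boundedness and would require an additional $\varepsilon$--$M$ layer to push the Egorov step through. The identification of $G$---which you sketch by reference to \cite{MGMM04}---is carried out explicitly in the paper via difference quotients of $G^{(h)}\tau$ in the $(\xi,\zeta)$-variables to pin down the first column, and a Poincar\'e argument on auxiliary functions $\alpha^{(h)}$ for the remaining two columns.
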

\begin{proof}
In the following, for notational
brevity, we will write simply $h$ instead of $h_j$.
Let $\big(\tilde{v}^{(h)}\big)$ be a sequence satisfying (\ref{teo5}), (\ref{teo65}) and (\ref{teo6}). 
We can assume that 
$$\liminf_{j \rightarrow \infty}
\frac{1}{h_{j}^{4}} \int_{\widetilde{\Omega}_{h_j}}
W\big(\big(\Psi^{(h_j)}\big)^{-1} (x), \nabla
\tilde{v}^{(h_j)}(x)\big) dx \leq C < + \infty,$$
otherwise (\ref{cl}) is trivial. Therefore, up to subsequences, (\ref{finite}) is satisfied. By Theorem 
\ref{compactness} we deduce that $(v, d_2, d_3)\in \mathcal{A}$,
\begin{equation}\label{rel1}
F^{(h)}:= \nabla \tilde{v}^{(h)}\circ \Psi^{(h)} \longrightarrow \overline{R} \quad \mbox{strongly in } \,L^{2}(\Omega;\mathbb{M}^{3\times 3}) 
\end{equation}
with $\overline{R} \in W^{1,2}((0, L);\mathbb{M}^{3\times 3})$, $\overline{R}\in SO(3)$ a.e., and
\begin{equation}\label{rel2}
R:= (v'\,|\,d_2\,|\,d_3) = \overline{R}\,R_{0}.
\end{equation}
Moreover, as in the proof of Theorem \ref{compactness}, we can construct a piecewise constant approximation
$R^{(h)}: [0, L]\rightarrow SO(3)$ such that
\begin{equation}\label{rot}
\int_{\Omega} \big|F^{(h)} - R^{(h)}\big|^{2} ds\,d\xi\,d\zeta \leq c\,h^{2}
\end{equation}
and $R^{(h)} \rightarrow \overline{R}$ strongly in $L^{2}(I';\mathbb{M}^{3})$ for every 
$I'\subset\subset [0, L]$.
Define the functions $G^{(h)}: \Omega\rightarrow \mathbb{M}^{3\times 3}$\, as
\begin{equation}
G^{(h)}:= \frac{1}{h}\Big((R^{(h)})^{T} F^{(h)} - Id\Big) =
\frac{1}{h}\Big((R^{(h)})^{T} \nabla_{h}v^{(h)}\big(\nabla_{h}\Psi^{(h)}\big)^{-1} - Id\Big).
\end{equation}
By (\ref{rot}) they are bounded in $L^{2}(\Omega;\mathbb{M}^{3\times 3})$, so there exists $G\in L^{2}(\Omega;\mathbb{M}^{3\times 3})$ such that $G^{(h)}\rightharpoonup G$ weakly in $L^{2}(\Omega;\mathbb{M}^{3\times 3})$. We claim that
\begin{equation}\label{liminf}
\liminf_{h\rightarrow 0}\frac{1}{h^4}\int_{\widetilde{\Omega}_{h}} W\Big(\big(\Psi^{(h)}\big)^{-1} (x),\nabla \tilde{v}^{(h)}(x)\Big) dx \geq \frac{1}{2} \int_{\Omega} Q_{3}(s,\xi,\zeta,G) ds\,d\xi\,d\zeta.
\end{equation}
Performing the change of variables $\Psi^{(h)}$, we have
\begin{align}\label{newvar}
\frac{1}{h^4}\int_{\widetilde{\Omega}_{h}} W\Big(\big(\Psi^{(h)}\big)^{-1} (x),\nabla \tilde{v}^{(h)}(x)\Big) dx & = \frac{1}{h^2}\int_{\Omega} W\big(s,\xi,\zeta, F^{(h)}\big)\det \big(\nabla_{h}\Psi^{(h)}\big) ds\,d\xi\,d\zeta \nonumber\\
& = \frac{1}{h^2}\int_{\Omega} W\big(s,\xi,\zeta, \big(R^{(h)}\big)^T F^{(h)}\big)\det \big(\nabla_{h}\Psi^{(h)}\big) ds\,d\xi\,d\zeta
\end{align}
where the last equality follows from the frame indifference of $W$.
Define the family of functions
$$
\chi^{(h)}(s,\xi,\zeta):=
\left\{
\begin{array}{ll}
\vspace{.1cm}
\displaystyle 1 & \mbox{in } \,\, \Omega\cap\big\{(s,\xi,\zeta): \snorm{G^{(h)}(s,\xi,\zeta)} \leq h^{-\frac{1}{2}}\big\},\\
\displaystyle 0 & \mbox{otherwise}.
\end{array}
\right.
$$
From the boundedness of $G^{(h)}$ in $L^2(\Omega;\mathbb{M}^{3\times 3})$ we get that \,\,
$\chi^{(h)}\rightarrow 1$ \,\,boundedly in measure, so that
\begin{equation}\label{convchi}
\chi^{(h)}G^{(h)} \rightharpoonup G \quad \mbox{weakly in}\, L^{2}(\Omega;\mathbb{M}^{3\times 3}).
\end{equation}
By expanding $W$ around the identity, we obtain that for every $(s,\xi,\zeta) \in \Omega$ and 
$A\in \mathbb{M}^{3\times 3}$
\begin{equation*}
W\big(s,\xi,\zeta, Id + A) = \frac{1}{2}\,\frac{\partial^{2}W}{\partial F^{2}}\,(s,\xi,\zeta, Id + t\,A)[A,A]
\end{equation*}
where \,$0<t<1$\, depends on the point $(s,\xi,\zeta)$ and on $A$. By (\ref{newvar}) and by the definition of $G^{(h)}$ 
\begin{align}
\frac{1}{h^2}\,\tilde{I}^{(h)}\big(\tilde{v}^{(h)}\big) &=
\frac{1}{h^2}\int_{\Omega} W\big(s,\xi,\zeta, Id + h\,G^{(h)} \big)\det \big(\nabla_{h}\Psi^{(h)}\big) ds\,d\xi\,d\zeta \nonumber\\
&\geq \frac{1}{h^2}\int_{\Omega} \chi^{(h)} W\big(s,\xi,\zeta,Id + h\,G^{(h)}\big)\det \big(\nabla_{h}\Psi^{(h)}\big) ds\,d\xi\,d\zeta \nonumber\\
&= \frac{1}{2}\int_{\Omega} \chi^{(h)}\left(\frac{\partial^{2}W}{\partial F^{2}}\,\big(s,\xi,\zeta, Id + h\,t(h)\,G^{(h)}\big)\big[G^{(h)},G^{(h)}\big]\right)\det \big(\nabla_{h}\Psi^{(h)}\big) ds\,d\xi\,d\zeta\label{3.22ter}
\end{align}
where $0 < t(h) < 1$\, depends on $(s,\xi,\zeta)$ and on $G^{(h)}$.
The last integral in the previous formula can be written as
\begin{align}
&\frac{1}{2}\int_{\Omega} \chi^{(h)}\left(\frac{\partial^{2}W}{\partial F^{2}}\,\big(s,\xi,\zeta, Id + h\,t(h)\,G^{(h)}\big)\big[G^{(h)},G^{(h)}\big]\right)\det \big(\nabla_{h}\Psi^{(h)}\big) ds\,d\xi\,d\zeta \nonumber\\
& = \frac{1}{2}\int_{\Omega}\Big( \chi^{(h)}\bigg(\frac{\partial^{2}W}{\partial F^{2}}\,\big(s,\xi,\zeta, Id + h\,t(h)\,G^{(h)}\big)\big[G^{(h)},G^{(h)}\big] - Q_{3}\big(s,\xi,\zeta, G^{(h)}\big)\Big)\Big)\det \big(\nabla_{h}\Psi^{(h)}\big) ds\,d\xi\,d\zeta \nonumber\\
&\hspace{1cm}+ \frac{1}{2}\int_{\Omega}Q_{3}\big(s,\xi,\zeta,\chi^{(h)}\,G^{(h)}\big)\det \big(\nabla_{h}\Psi^{(h)}\big) ds\,d\xi\,d\zeta. \label{3.22quater}
\end{align}
By Scorza-Dragoni theorem there exists a compact set $K\subset \Omega$ such that the function \,$\partial^{2}W/\partial F^{2}$ \, restricted to $K\times \overline{B_{\delta}(Id)}$ is continuous, hence uniformly continuous.
Since $h\,t(h)\,\chi^{(h)}\,G^{(h)}$ is uniformly small for $h$ small enough, for every $\varepsilon > 0$ we have
\begin{align*}
\frac{1}{2}\int_{\Omega}&\chi^{(h)}\Bigg(\frac{\partial^{2}W}{\partial F^{2}}\,\big(s,\xi,\zeta, Id + h\,t(h)\,G^{(h)}\big)\big[G^{(h)},G^{(h)}\big] - Q_{3}\big(s,\xi,\zeta,G^{(h)}\big)\Bigg)\det \big(\nabla_{h}\Psi^{(h)}\big) ds\,d\xi\,d\zeta \\
&\geq - \frac{\varepsilon}{2}\int_{K}\chi^{(h)}\big|\,G^{(h)}\big|^{2}\det \big(\nabla_{h}\Psi^{(h)}\big) ds\,d\xi\,d\zeta
\geq -\,C\,\varepsilon
\end{align*}
for $h$ small enough. As for the second integral in (\ref{3.22quater}), 
by (\ref{convdet}) and (\ref{newvar}) we get 
\begin{equation}\label{3.22quinq}
\liminf_{h\rightarrow 0}\frac{1}{2}\int_{\Omega}Q_{3}\big(s,\xi,\zeta,\chi^{(h)}\,G^{(h)}\big)\det \big(\nabla_{h}\Psi^{(h)}\big) ds\,d\xi\,d\zeta \geq \frac{1}{2}\int_{\Omega}Q_{3}\big(s,\xi,\zeta,G\big)ds\,d\xi\,d\zeta
\end{equation}
since $Q_{3}$ is a nonnegative quadratic form.
Combining (\ref{3.22ter}), (\ref{3.22quater}) and (\ref{3.22quinq}) we have
\begin{equation*}
\liminf_{h\rightarrow 0}\frac{1}{h^2}\,\tilde{I}^{(h)}(\tilde{v}^{(h)})
\geq \frac{1}{2}\int_{\Omega}Q_{3}(s,\xi,\zeta,G) ds\,d\xi\,d\zeta \,-\,C\,\varepsilon
\end{equation*}
and, since $\varepsilon$ is arbitrary, (\ref{liminf}) is proved. It remains to identify $G$. 

Fix $(\xi_0,\zeta_0)\in D$; let $\delta_0 = \delta_{0}(\xi_0,\zeta_0) > 0$ be such that 
$B_{2\,\delta_0}(\xi_0,\zeta_0)\subset D$ and let $U_0:= (0, L)\times B_{\delta_0}(\xi_0,\zeta_0)$.
Fix $t\in \mathbb{R}-\{0\}$, $|\,t\,|\,<\,\delta_0$. 
For every $(s,\xi,\zeta)\in U_0$ we can define 
the difference quotients of the functions $G^{(h)}$ with respect to 
the variables $\xi$ and $\zeta$ along the direction $\tau$, given by
$$
\left\{
\begin{array}{ll}
H^{(h)}_{t}(s,\xi,\zeta):=& \dfrac{1}{t}\,\Big(G^{(h)}(s,\xi + t,\zeta) - G^{(h)}(s,\xi,\zeta)\Big)\,\tau(s),\\
\mbox{ }\\
K^{(h)}_{t}(s,\xi,\zeta):=& \dfrac{1}{t}\,\Big(G^{(h)}(s,\xi,\zeta + t) - G^{(h)}(s,\xi,\zeta)\Big)\,\tau(s),  
\end{array}
\right.
$$
and the corresponding difference quotients of the limit function $G$
$$
\left\{
\begin{array}{ll}
H_{t}(s,\xi,\zeta):=& \dfrac{1}{t}\,\Big(G(s,\xi + t,\zeta) - G(s,\xi,\zeta)\Big)\,\tau(s),\\
\mbox{ }\\
K_{t}(s,\xi,\zeta):=& \dfrac{1}{t}\,\Big(G(s,\xi,\zeta + t) - G(s,\xi,\zeta)\Big)\,\tau(s).  
\end{array}
\right.
$$
Since $G^{(h)}\rightharpoonup G$ in $L^{2}(\Omega;\mathbb{M}^{3\times 3})$ and 
$R^{(h)} \longrightarrow \overline{R}$ boundedly in measure, we have
\begin{align}\label{convH}
H^{(h)}_{t}&\rightharpoonup H_{t}
\quad \mbox{weakly in} \, L^{2}(U_{0};\mathbb{R}^{3}) \,\, \mbox{and} \nonumber\\
R^{(h)}\,H^{(h)}_{t} &\rightharpoonup \overline{R}\,H_{t}
\quad \mbox{weakly in} \, L^{2}(U_{0};\mathbb{R}^{3}).  
\end{align}
In terms of $F^{(h)}$ the left-hand side of (\ref{convH}) reads as
\begin{equation}\label{riscritta}
R^{(h)}(s) H^{(h)}_{t}(s,\xi,\zeta) = 
\frac{1}{h\,t}\,\Big(F^{(h)}(s,\xi + t,\zeta) - F^{(h)}(s,\xi,\zeta)\Big)\,\tau(s).
\end{equation}
Now recall that, if we set $ v^{(h)}:= \tilde{v}^{(h)}\circ \Psi^{(h)}$, we have
\begin{equation}\label{VvsF}
\nabla v^{(h)} = F^{(h)} \, \nabla \Psi^{(h)}; 
\end{equation}
in particular, taking the first column of the two matrices, we obtain
\begin{equation*}
F^{(h)}(s,\xi,\zeta)\,\tau(s) = \partial_{s}v^{(h)}(s,\xi,\zeta) - h\,F^{(h)}(s,\xi,\zeta)\,(\xi\,\nu'_{2}(s) + \zeta\,\nu'_{3}(s)).
\end{equation*}
By the last equality and (\ref{riscritta}) we get
\begin{align}\label{lunga}
R^{(h)}(s) H^{(h)}_{t}(s,\xi,\zeta) 
=& \,\frac{1}{h\,t}\,\Big(\partial_{s} v^{(h)}(s,\xi + t,\zeta) - \partial_{s} v^{(h)}(s,\xi,\zeta)\Big)\nonumber\\ 
-&\,\frac{1}{t}\Big((\xi + t)\,F^{(h)}(s,\xi + t,\zeta) - \xi\,F^{(h)}(s,\xi,\zeta)\Big)\,\nu_{2}'(s)\nonumber\\
-&\,\frac{1}{t}\Big(\zeta\,F^{(h)}(s,\xi + t,\zeta) - \zeta\,F^{(h)}(s,\xi,\zeta)\Big)\,\nu_{3}'(s).
\end{align}
For the first term we have
\begin{align*}
\frac{1}{h\,t}\,\partial_{s}\Big(v^{(h)}(s,\xi + t,\zeta) - v^{(h)}(s,\xi,\zeta)\Big)
& = \frac{1}{h\,t}\,\partial_{s}\bigg(\int_{\xi}^{\xi + t}\partial_{\xi}v^{(h)}(s,\vartheta,\zeta)\, d\vartheta\bigg)\\
& = \partial_{s}\bigg(\frac{1}{t}\int_{0}^{t}\frac{1}{h}\,\partial_{\xi}v^{(h)}(s,\xi +\vartheta,\zeta)\,d\vartheta\bigg),
\end{align*}
so by (\ref{teo6}) and (\ref{rel2})
\begin{equation}\label{RHS1}
\frac{1}{h\,t}\,\partial_{s}\Big(v^{(h)}(s,\xi + t,\zeta) - v^{(h)}(s,\xi,\zeta)\Big) \rightharpoonup  d'_2 (s) = \partial_{s} (\overline{R}(s)\,\nu_2(s)) \quad\mbox{weakly in}\, W^{-1,2}(U_{0};\mathbb{R}^{3}).
\end{equation}
By (\ref{rel1}) the second term in (\ref{lunga}) converges to
\begin{equation}\label{RHS2}
\frac{1}{t}\,\Big((\xi + t)\,\overline{R}(s) - \xi\,\overline{R}(s)\Big)\,\nu_{2}'(s) = \overline{R}(s)\,\nu_2'(s) \quad\mbox{strongly in}\, L^{2}(U_0;\mathbb{R}^{3})
\end{equation}
and the last term to
\begin{equation}\label{RHS3}
\frac{1}{t}\,\Big(\zeta\,\overline{R}(s) - \zeta\,\overline{R}(s)\Big)\,\nu_{3}'(s) = 0 \quad\mbox{strongly in}\, L^{2}(U_0;\mathbb{R}^{3}).
\end{equation}
Putting together (\ref{RHS1}), (\ref{RHS2}), (\ref{RHS3}) and (\ref{convH})
\begin{equation*}
\overline{R}(s)\,H_{t}(s,\xi,\zeta) = \partial_{s} (\overline{R}(s)\,\nu_2(s)) - \overline{R}(s)\,\nu_2'(s) \, \,\mbox{a.e. in } \, U_0
\end{equation*}
and so
\begin{equation}\label{acca}
H_{t}(s,\xi,\zeta) = (\overline{R}(s))^{T}\,\overline{R}'(s)\,\nu_2(s) \,\,\mbox{a.e. in } \, U_0.
\end{equation}
Repeating the same argument for $K^{(h)}_{t}$ we get
\begin{equation}\label{kappa}
K_{t}(s,\xi,\zeta) = (\overline{R}(s))^{T}\,\overline{R}'(s)\,\nu_3(s) \,\,\mbox{a.e. in } \, U_0.
\end{equation}
From the last two equalities we deduce that the functions $H_{t}$ and $K_{t}$ depend only on the variable $s$.
Moreover, letting $t$ go to $0$ both in (\ref{acca}) and in (\ref{kappa}), we get that the gradient of $G\,\tau$ w.r.to the variables $(\xi,\zeta)$ depends only on $s$, i.e.,
\begin{equation}\label{gradgrad}
\nabla_{(\xi,\zeta)}\big( G(s,\xi,\zeta)\,\tau(s)\big) = (\overline{R}(s))^{T}\,\overline{R}'(s)\,(\nu_2(s)\,|\,\nu_3(s)) \,\mbox{a.e. in } \, U_0.
\end{equation}
Being this equality valid in $U_0 = (0, L)\times B_{\delta_0}(\xi_0,\zeta_0)$, for an arbitrary $(\xi_0,\zeta_0)\in D$, we can conclude that it holds a.e. in the whole $\Omega$. Since $D$ is connected, we obtain that for a.e. $(s,\xi,\zeta) \in \Omega$
\begin{equation*}
G(s,\xi,\zeta)\,\tau(s) = (\overline{R}(s))^{T}\,\overline{R}'(s)\,(\xi\,\nu_2(s) + \zeta\,\nu_3(s)) + g(s)
\end{equation*}
with $g: [0, L]\rightarrow \mathbb{R}^{3}$. Remark that from the previous formula 
$g\in L^{2}((0, L);\mathbb{R}^{3})$.
\newline
It remains to identify the components $G(s,\xi,\zeta)\,\nu_{2}(s)$ and $G(s,\xi,\zeta)\,\nu_{3}(s)$.
By (\ref{VvsF}) we have
\begin{eqnarray*}
G^{(h)}(s,\xi,\zeta)\,\nu_{2}(s) &=& \frac{1}{h}\,\Big((R^{(h)}(s))^{T} F^{(h)}(s,\xi,\zeta)\,\nu_{2}(s) - \nu_{2}(s)\Big)\\
&=& \frac{1}{h}\,\Big(h^{-1}(R^{(h)}(s))^{T}\partial_{\xi}v^{(h)}(s,\xi,\zeta) - \nu_{2}(s)\Big)
\end{eqnarray*}
and
\begin{eqnarray*}
G^{(h)}(s,\xi,\zeta)\,\nu_{3}(s) &=& \frac{1}{h}\,\Big((R^{(h)}(s))^{T} F^{(h)}(s,\xi,\zeta)\,\nu_{3}(s) - \nu_{3}(s)\Big)\\
&=& \frac{1}{h}\,\Big(h^{-1}(R^{(h)}(s))^{T}\partial_{\zeta}v^{(h)}(s,\xi,\zeta) - \nu_{3}(s)\Big),
\end{eqnarray*}
so, if we define
\begin{equation*}
\alpha^{(h)}(s,\xi,\zeta):= \frac{1}{h}\,\Big(h^{-1}(R^{(h)})^{T} v^{(h)}(s,\xi,\zeta) - \xi\,\nu_{2}(s) - \zeta\,\nu_{3}(s)\Big)
\end{equation*}
it turns out that
\begin{equation}\label{alfa}
\partial_{\xi}\alpha^{(h)}(s,\xi,\zeta) = G^{(h)}(s,\xi,\zeta)\,\nu_{2}(s)\quad\mbox{and}\quad
\partial_{\zeta}\alpha^{(h)}(s,\xi,\zeta) = G^{(h)}(s,\xi,\zeta)\,\nu_{3}(s).
\end{equation}
Applying the Poincar\'e inequality  to the functions $\alpha^{(h)}$ for fixed $s$ we obtain that 
for a.e. $s\in [0, L]$
\begin{equation*}
\int_{D}\big|\,\alpha^{(h)}(s,\xi,\zeta) - \alpha_{0}^{(h)}(s)\,\big|^{2}\,d\xi\,d\zeta \leq c\int_{D}\left(\big|\,\partial_{\xi}\alpha^{(h)}(s,\xi,\zeta)\big|^{2} + \big|\,\partial_{\zeta}\alpha^{(h)}(s,\xi,\zeta)\big|^{2}\right)\,d\xi\,d\zeta,
\end{equation*}
where $\alpha_{0}^{(h)}(s):= \int_{D}\alpha^{(h)}(s,\xi,\zeta)\,d\xi\,d\zeta$. Integrating over $[0, L]$, we have
\begin{equation*}
\big|\big|\alpha^{(h)} - \alpha_{0}^{(h)}\big|\big|^{2}_{L^{2}(\Omega)} \leq c\left(\big|\big|\partial_{\xi}\alpha^{(h)}\big|\big|^{2}_{L^{2}(\Omega)} +\big|\big|\partial_{\zeta}\alpha^{(h)}\big|\big|^{2}_{L^{2}(\Omega)}\right).
\end{equation*}
Since the right-hand side is bounded by (\ref{alfa}), there exists 
a function $\alpha\in L^{2}(\Omega; \mathbb{R}^{3})$ such that,
up to subsequences,
$$\alpha^{(h)} - \alpha_{0}^{(h)}\rightharpoonup \alpha \quad \mbox{weakly in} \,L^{2}(\Omega; \mathbb{R}^{3}).$$
Moreover, from (\ref{alfa}) we conclude that
\begin{equation}\label{alfa1}
\partial_{\xi}\alpha(s,\xi,\zeta) = G(s,\xi,\zeta)\,\nu_{2}(s)\quad\mbox{and}\quad
\partial_{\zeta}\alpha(s,\xi,\zeta) = G(s,\xi,\zeta)\,\nu_{3}(s),
\end{equation}
therefore $\partial_{\xi}\alpha,\partial_{\zeta}\alpha \in L^{2}(\Omega; \mathbb{R}^{3})$.
Now, define the functions $\hat{\alpha}(s,\xi,\zeta):= R_{0}^{T}(s)\,\alpha(s,\xi,\zeta)$ and  $\hat{g}(s):= R_{0}^{T}(s)\,g(s)$.
Thanks to these definitions and to (\ref{rel2}), $G$ can be written as
\begin{align}\label{tildeG}
G =& \Bigg(\Bigl(R\,R_{0}^{T}\Bigr)^{T} \Bigl(R\,R_{0}^{T}\Bigr)' R_{0}\Bigg(\begin{array}{c}
0\\
\xi\\
\zeta
\end{array}\Bigg) + g\,\bigg|\,\partial_{\xi}\alpha\,
\bigg|\,\partial_{\zeta}\alpha\Bigg)\,R^{T}_{0}\nonumber\\
=&\, R_{0}\Bigg(\Bigl( R^{T}R' + (R_{0}^{T})'R_{0}\Bigr)\Bigg(\begin{array}{c}
0\\
\xi\\
\zeta
\end{array}\Bigg) + \hat{g} \,\bigg|\,\partial_{\xi}\hat{\alpha}\,
\bigg|\,\partial_{\zeta}\hat{\alpha}\Bigg)\,R_{0}^{T} \nonumber\\
=&\, R_{0}\Bigg(\Bigl( R^{T}R' - R_{0}^{T}R_{0}'\Bigr)\Bigg(\begin{array}{c}
0\\
\xi\\
\zeta
\end{array}\Bigg) + \hat{g}\,\bigg|\,\partial_{\xi}\hat{\alpha}\,
\bigg|\,\partial_{\zeta}\hat{\alpha}\Bigg)R_{0}^{T},
\end{align}
where the last equality follows from the identity \, $\big(R_{0}^{T}\big)'R_{0} + R_{0}^{T}R_{0}' = 0$.
Combining (\ref{liminf}) and (\ref{tildeG}), we obtain
\begin{equation*}
\liminf_{h\rightarrow 0}\frac{1}{h^2}\,\tilde{I}^{(h)}(\tilde{v}^{(h)})
\geq\frac{1}{2}\int_{\Omega} Q_{3}\bigg(s,\xi,\zeta,R_{0}(s)\Bigg(P(s)\Bigg(\begin{array}{c}
0\\
\xi\\
\zeta
\end{array}\Bigg) + \hat{g}\,\bigg|\,\partial_{\xi}\hat{\alpha}\,
\bigg|\,\partial_{\zeta}\hat{\alpha}\Bigg)R_{0}^{T}(s)\bigg)ds\,d\xi\,d\zeta,
\end{equation*}
with $P(s):= R^{T}(s)R'(s) - R_{0}^{T}(s)R_{0}'(s)$.
By the definition of the quadratic form $Q_2$ in (\ref{Q2}) we clearly have $\int_{D} Q_{3}(s,\xi,\zeta,G)d\xi\,d\zeta \geq Q_{2}(s,P(s))$, and so
\begin{equation*}
\liminf_{h\rightarrow 0} \frac{1}{h^4}
\int_{\widetilde{\Omega}_{h}} W\big(\big(\Psi^{(h)}\big)^{-1} (x),
\nabla \tilde{v}^{(h)}(x)\big) dx \geq  \frac{1}{2}\int_{0}^{L} Q_{2}
\big(s,\big(R^{T}(s)R'(s) - R_{0}^{T}(s)R_{0}'(s)\big)\big)ds.
\end{equation*}
\end{proof}

%%%%%%%%%%%%%%%%%%%%%%%%%%%%%%%%%%%%%  LIMSUP %%%%%%%%%%%%%%%%%%%%%%%%%%%%%%%%%%%%%%%%%%
\subsection{Bound from above}
In this subsection we show that the lower bound proved in Theorem~\ref{scithm} is optimal.
\begin{thm}\label{bfa}
For every sequence of positive $(h_j)$ converging to $0$ and for every $(v,d_2,d_3)\in \mathcal{A}$ there exists a sequence $\big(\tilde{v}^{(h_j)}\big) \subset W^{1,2}\big(\widetilde{\Omega}_{h_j}; \mathbb{R}^{3}\big)$ such that 
\begin{align}
\tilde{v}^{(h_j)}\circ\Psi^{(h_j)} &\rightarrow v \quad \mbox{strongly in}\,\,
W^{1,2}(\Omega; \mathbb{R}^{3}),\label{star}\\
\frac{1}{h_j}\,\partial_\xi \big(\tilde{v}^{(h_j)}\circ\Psi^{(h_j)} \big)\,&\rightarrow d_2\quad
\mbox{strongly in}\,\, L^{2}(\Omega; \mathbb{R}^{3}), \label{starr1}\\
\frac{1}{h_j}\,\partial_\zeta \big(\tilde{v}^{(h_j)}\circ\Psi^{(h_j)}\big)\,&\rightarrow d_3\quad
\mbox{strongly in}\,\, L^{2}(\Omega; \mathbb{R}^{3}), \label{star2}
\end{align}
and
\begin{equation}\label{starec}
I(v,d_2,d_3) = \lim_{j \rightarrow \infty}
\frac{1}{h_{j}^4}\int_{\widetilde{\Omega}_{h_j}} W\big(\big(\Psi^{(h_j)}\big)^{-1} (x),
\nabla \tilde{v}^{(h_j)}(x)\big) dx, 
\end{equation}
where the class $\mathcal{A}$ and the functional $I$ are defined in (\ref{defA}) and (\ref{funI}), respectively.
\end{thm}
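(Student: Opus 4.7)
The plan is to construct an explicit recovery sequence using the optimal correctors appearing in the definition of $Q_2$. I would first prove the theorem assuming $(v,d_2,d_3)\in\mathcal{A}$ is smooth, meaning $R:=(v'\,|\,d_2\,|\,d_3)\in C^\infty([0,L];SO(3))$, and then recover the general case by density. The density step is standard: approximate $R\in W^{1,2}((0,L);SO(3))$ by smooth rotations $R_n$ (mollify entrywise and project pointwise onto $SO(3)$ via the nearest-point projection, which is smooth in a neighborhood of $SO(3)$), set $v_n(s):=v(0)+\int_0^s R_n(t)e_1\,dt$ and $d_{k,n}:=R_n e_k$ for $k=2,3$. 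Since $P_n:=R_n^T R_n'-R_0^T R_0'\to P:=R^T R'-R_0^T R_0'$ strongly in $L^2$ and $Q_2(s,\cdot)$ is a non-negative quadratic form with uniformly bounded coefficients, $I(v_n,d_{2,n},d_{3,n})\to I(v,d_2,d_3)$, and a diagonal argument reduces the construction to the smooth case.

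\textbf{Ansatz and energy computation.} For smooth data let $\hat\alpha(s,\cdot)\in W^{1,2}(D;\mathbb{R}^3)$ and $\hat g(s)\in\mathbb{R}^3$ be minimizers realizing $Q_2(s,P(s))$; by Remark \ref{remark} they depend linearly on $P$, and by elliptic regularity in $s$ they can be taken smooth; an additional $L^\infty$-truncation (removed at the end by a second diagonal argument) makes them uniformly bounded. Define
\[
v^{(h)}(s,\xi,\zeta):=v(s)+h\bigl[\xi\,d_2(s)+\zeta\,d_3(s)\bigr]+h\int_0^s R(t)\hat g(t)\,dt+h^2 R(s)\hat\alpha(s,\xi,\zeta)
\]
and set $\tilde v^{(h)}:=v^{(h)}\circ(\Psi^{(h)})^{-1}$. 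The convergences (\ref{star})-(\ref{star2}) are immediate from this form, since the correctors vanish in the required norms at rate $O(h)$. For the energy, a direct expansion using (\ref{invA}) yields $F^{(h)}:=\nabla_h v^{(h)}(\nabla_h\Psi^{(h)})^{-1}=\overline R+O(h)$ uniformly, where $\overline R:=RR_0^T$, and the matrix $G^{(h)}:=\frac{1}{h}(\overline R^T F^{(h)}-Id)$ converges uniformly to a limit $G$ satisfying
\[
R_0^T G\, R_0 = \bigl(P(0,\xi,\zeta)^T+\hat g\,\big|\,\partial_\xi\hat\alpha\,\big|\,\partial_\zeta\hat\alpha\bigr),
\]
which is precisely the argument realizing the infimum in $Q_2(s,P(s))$. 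Applying frame indifference together with a $C^2$ Taylor expansion of $W$ around $Id$ gives $\frac{1}{h^2}W(s,\xi,\zeta,F^{(h)})\to\frac12 Q_3(s,\xi,\zeta,G)$ pointwise and in $L^1$, and the definition of $Q_2$ then yields $\frac{1}{h^4}\int_{\widetilde\Omega_h}W\,dx\to\frac12\int_0^L Q_2(s,P(s))\,ds=I(v,d_2,d_3)$.

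\textbf{Main obstacle.} The delicate step is the Taylor expansion of $W$: since $W$ is allowed to equal $+\infty$ outside a $\delta$-neighborhood of $SO(3)$, the $C^2$ bound (ii) only applies inside that set. The $L^\infty$-truncation of $\hat\alpha,\hat g$ ensures that for $h$ small enough $\overline R^T F^{(h)}=Id+hG^{(h)}+O(h^2)$ stays uniformly within $\delta$ of $Id$, so (ii) applies; Scorza-Dragoni (as invoked in the proof of Theorem \ref{scithm}) then supplies the uniform continuity of $\partial^2 W/\partial F^2$ on the relevant compact set, converting pointwise Taylor expansion into an $L^1$ statement via dominated convergence. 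Removing the truncation by a last diagonal argument, exploiting the $L^2$-continuous linear dependence of the minimizers on $P$ from Remark \ref{remark}, concludes the construction.
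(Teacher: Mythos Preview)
Your approach is essentially the same as the paper's: the same ansatz $v^{(h)}=v+h(\xi d_2+\zeta d_3)+hq+h^2\beta$ (with your choices $q'=R\hat g$, $\beta=R\hat\alpha$), the same Taylor expansion via frame indifference and assumption (ii) with dominated convergence, and the same two-stage density argument (smooth $R$ first, then approximate the optimal correctors). The only imprecision is the phrase ``elliptic regularity in $s$'': smoothness in $s$ comes simply from the linear dependence of the minimizers on $P(s)$, while in the cross-section variables $\hat\alpha$ is in general only $W^{1,2}(D)$ and must be approximated by $C^1(\overline\Omega)$ functions (not merely $L^\infty$-truncated) before being inserted into the ansatz so that $\partial_s\beta$ makes sense and $G^{(h)}$ is uniformly bounded---the paper does exactly this, first proving (\ref{gammasup}) for arbitrary smooth $q,\beta$ and then deferring the approximation of the optimal correctors to \cite{MM03}.
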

\begin{proof}
Let $(v,d_{2},d_{3})\in \mathcal{A}$. Assume in addition that 
$v\in C^{2}([0,L];\mathbb{R}^{3})$\,and\,  $d_{2},d_{3}\in C^{1}([0,L];\mathbb{R}^{3})$. 
Consider the functions $v^{(h)}: \Omega\rightarrow \mathbb{R}^{3}$ defined by
$$v^{(h)}(s,\xi,\zeta):= v(s) + h\,\xi\,d_{2}(s) + h\,\zeta\,d_{3}(s) + h\,q(s) +  h^{2}\,\beta(s,\xi,\zeta),$$
with $q\in C^{1}([0,L];\mathbb{R}^{3})$ and $\beta\in C^{1}(\overline{\Omega};\mathbb{R}^{3})$. 
We define $\tilde{v}^{(h)} := v^{(h)}\circ\big(\Psi^{(h)}\big)^{-1}$;
these functions clearly satisfy (\ref{star}). Moreover, since
\begin{equation}\label{starec2}
\nabla_{h} \big(\tilde{v}^{(h)}\circ\Psi^{(h)}\big) = 
\nabla_{h} v^{(h)} = (v'\,|\,d_{2}\,|\,d_{3}) + h\,\big(\xi\,d'_2 + \zeta\,d'_3 + q'\,|\,\partial_\xi\beta\,|\,\partial_\zeta\beta\big) + h^2\partial_s \beta\otimes e_1,
\end{equation}
also (\ref{starr1}) and (\ref{star2}) follow easily.
In order to prove (\ref{starec}), we first observe that, performing the change of variables $(s,\xi,\zeta) = \big(\Psi^{(h)}\big)^{-1}(x)$, we obtain 
\begin{align}
\frac{1}{h^2}\,\tilde{I}^{(h)}\big(\tilde{v}^{(h)}\big) =& 
\frac{1}{h^2}\,\int_{\Omega} W\big(s,\xi,\zeta,\nabla\tilde{v}^{(h)}\circ
\Psi^{(h)}\big)\det
\big(\nabla_{h}\Psi^{(h)}\big)
ds\,d\xi\,d\zeta \nonumber\\
=& \frac{1}{h^2}\,\int_{\Omega} W\big(s,\xi,\zeta,\nabla_h \big(\tilde{v}^{(h)}\circ
\Psi^{(h)}\big)\,\big(\nabla_h\Psi^{(h)}\big)^{-1}\big)\det
\big(\nabla_{h}\Psi^{(h)}\big)
ds\,d\xi\,d\zeta,\label{flower2}
\end{align}
where the last equality is justified observing that
\begin{equation*}
\nabla_h\big(\tilde{v}^{(h)}\circ
\Psi^{(h)}\big) = \big(\nabla\tilde{v}^{(h)}\circ
\Psi^{(h)}\big)\,\big(\nabla_h\Psi^{(h)}\big).
\end{equation*}
Then, by the definition of $\tilde{v}^{(h)}$,
\begin{equation}\label{flower}
\frac{1}{h^2}\,\tilde{I}^{(h)}\big(\tilde{v}^{(h)}\big) = \frac{1}{h^2}\,\int_{\Omega} W\big(s,\xi,\zeta,\big(\nabla_h v^{(h)}\big)\,\big(\nabla_h\Psi^{(h)}\big)^{-1}\big)\det
\big(\nabla_{h}\Psi^{(h)}\big)
ds\,d\xi\,d\zeta.
\end{equation}
Using (\ref{invA}) and (\ref{starec2}) we get
\begin{align*}
\nabla_{h} v^{(h)}\,\big(\nabla_{h}\Psi^{(h)}\big)^{-1} =&\,
R\, R_{0}^{T} + h\,(\xi\,d\,'_{2} + \zeta\,d\,'_{3} + q'\,|\,\partial_{\xi}\beta\,|\,\partial_{\zeta}\beta)\,R_{0}^{T} \\
-& \,h\, R\,R_{0}^{T}\big[(\xi\,\nu'_{2} + \zeta\,\nu'_{3})\otimes e_1\big]\,R_{0}^{T} + O(h^{2}),
\end{align*}
where $R = (v'|d_2|d_3)$ and $O(h^{2})/h^2$ is uniformly bounded.
Now consider the rotation $\overline{R}(s) = R(s)\, R_{0}^{T}(s)$. Then
\begin{equation*}
\overline{R}^{T}\nabla_{h} v^{(h)}\,\big(\nabla_{h}\Psi^{(h)}\big)^{-1} =\,
Id + h\,\overline{R}^{T}(\xi\,d\,'_{2} +\,\zeta\,d\,'_{3} + q'\,|\,\partial_{\xi}\beta\,|\,\partial_{\zeta}\beta)\,R_{0}^{T} - \,h \,\big[(\xi\,\nu'_{2} + \zeta\,\nu'_{3})\otimes e_1\big]\,R_{0}^{T} + O(h^{2}).
\end{equation*}
If we define the functions
\begin{equation*}
B^{(h)}(s,\xi,\zeta):= \,
\frac{1}{h}\bigg(\overline{R}^{T}\, \nabla_{h}v^{(h)}\,\big(\nabla_{h}\Psi^{(h)}\big)^{-1} - Id \bigg),
\end{equation*}
it turns out that
\begin{align}
B^{(h)} &=\, (R_{0}\, R^{T})(\xi\,d\,'_{2} +\zeta\,d\,'_{3} + q'\,|\,\partial_{\xi}\beta\,|\,\partial_{\zeta}\beta)R_{0}^{T}
- \big[(\xi\,\nu'_{2} + \zeta\,\nu'_{3})\otimes e_1\big]\,R_{0}^{T} + O(h)\nonumber\\
&=\,
R_{0}\, R^{T}\Bigg(R'\Bigg(\begin{array}{c}
0\\
\xi\\
\zeta
\end{array}\Bigg) + q' \,\bigg|\,
\partial_{\xi}\beta\,\bigg|\,\partial_{\zeta}\beta\Bigg)R_{0}^{T} - \Bigg[
\Bigg(R_{0}'\Bigg(
\begin{array}{c}
0\\
\xi\\
\zeta
\end{array}
\Bigg)\Bigg)\otimes e_1\Bigg]\,R_{0}^{T} + O(h)\nonumber\\
&=\,
R_{0}\Bigg(\Big(R^{T} R' - R_{0}^{T} R_{0}'\Big)\Biggl(\begin{array}{c}
0\\
\xi\\
\zeta
\end{array}\Biggr) + R^{T}q\,\bigg|\,R^{T}
\partial_{\xi}\beta\,\bigg|\,R^{T}\partial_{\zeta}\beta\Bigg)R_{0}^{T} + O(h)\nonumber\\
&=:\, G_{q,\beta} + O(h)\label{rombo}
\end{align}
where $O(h)/h$ is uniformly bounded.
By frame indifference and the definition of $B^{(h)}$, we have
\begin{align*}
\frac{1}{h^{2}}\,W\big(s,\xi,\zeta,\nabla_{h} v^{(h)}\big(\nabla_{h}\Psi^{(h)}\big)^{-1}) &= \, \frac{1}{h^{2}}\,W\big(s,\xi,\zeta,\overline{R}^{T} \nabla_{h} v^{(h)}\big(\nabla_{h}\Psi^{(h)}\big)^{-1}) \\
&= \,\frac{1}{h^{2}}\,W\big(s,\xi,\zeta,Id + h\,B^{(h)}\big).
\end{align*}
Using (\ref{rombo}) and the expansion of $W$ around the identity, we obtain
\begin{equation*}
\frac{1}{h^{2}}\,W\big(s,\xi,\zeta, \nabla_{h} v^{(h)}\big(\nabla_{h}\Psi^{(h)}\big)^{-1})\rightarrow \frac{1}{2}\,Q_{3}(s,\xi,\zeta,G_{q,\beta}) \quad\mbox{a.e.}.
\end{equation*}
Moreover, the assumption (ii) gives the uniform bound
\begin{equation*}
\frac{1}{h^{2}}\,W\big(s,\xi,\zeta, \nabla_{h} v^{(h)}\big(\nabla_{h}\Psi^{(h)}\big)^{-1}) \leq
\frac{1}{2}\,C_{1}\,|\,G_{q,\beta}\,|^{2} + C \in\, L^{1}(\Omega),
\end{equation*}
so, by the dominated convergence theorem and by (\ref{flower}) we conclude that
\begin{equation}\label{gammasup}
\lim_{h\rightarrow 0}\frac{1}{h^4}\int_{\widetilde{\Omega}_{h}} W\Big(\big(\Psi^{(h)}\big)^{-1} (x),\nabla \tilde{v}^{(h)}(x)\Big) dx
= \frac{1}{2}\int_{\Omega} Q_{3}(s,\xi,\zeta,G_{q,\beta})\,ds\,d\xi\,d\zeta .
\end{equation}
This holds for every $q\in C^{1}([0,L];\mathbb{R}^{3})$ and for every $\beta\in C^{1}(\overline{\Omega};\mathbb{R}^{3})$. 

Consider now the general case. Let $(v,d_{2},d_{3})\in\mathcal{A}$, and let $\hat{\alpha}(s,\cdot)\in W^{1,2}(D;\mathbb{R}^{3})$, $\hat{g}(s)$ be a solution to the minimum problem (\ref{Q2}) for $P = R^T R' - R_0^T R'_0$. 
By Remark \ref{remark}, $\hat{\alpha}\in L^{2}(\Omega;\mathbb{R}^{3})$ with $\partial_{\xi}\hat{\alpha}, \partial_{\zeta}\hat{\alpha}\in L^{2}(\Omega;\mathbb{R}^{3})$ and $\hat{g}\in L^{2}((0,L);\mathbb{R}^{3})$. 
In order to conclude the proof it is enough to construct a sequence of smooth deformations converging to $(v,d_{2},d_{3})$, on which the energy $\tilde{I}^{(h)}/h^2$ converges to the right-hand side of (\ref{gammasup}) with $q$ and $\beta$ replaced by $R^T \hat{g}$ and $R^T \hat{\alpha}$, respectively. 
This can be done by repeating the same construction as in \cite{MM03}. 
\end{proof}

%%%%%%%%%%%%%%%%%%%%%%%%%%%%%%%%%%%%%%   HOMOGENEOUS  CASE   %%%%%%%%%%%%%%%%%%%%%%%%%%%%%%%%%%%%%%%%%%%%%

\begin{rem}[Homogeneous rods]
If the rod is made of a homogeneous material, i.e., $W(z,F) = W(F)$, for a.e. $z$ in $\Omega$ and every $F\in \mathbb{M}^{3\times 3}$, then the limiting energy density $Q_2$ is given by the simpler formula
\begin{equation}\label{(a)}
Q_{2}(s,P) = \inf_{\hat{\alpha}\in W^{1,2}(D;\mathbb{R}^{3})}\Bigg\{\int_{D} Q_{3}\bigg(R_{0}(s)\Bigg(P\, \Bigg(\begin{array}{c}
0\\
\xi\\
\zeta
\end{array}\Bigg)\,\bigg|\,\partial_{\xi}\hat{\alpha}\,\bigg|\,\partial_{\zeta}\hat{\alpha}\Bigg)
R_{0}^{T}(s)\bigg)d\xi\,d\zeta\Bigg\}.
\end{equation}
In other words the optimal choice for $\hat{g}$ in (\ref{Q2}) is $\hat{g} = 0$.
\newline
In order to show this, let $\hat{\alpha}\in W^{1,2}(D;\mathbb{R}^{3})$ and let $\hat{g}\in\mathbb{R}^3$. We 
introduce the function
\begin{equation}\label{alfetta}
\tilde{\alpha}(s,\xi,\zeta):= \hat{\alpha}(s,\xi,\zeta) - \xi \int_{D}\partial_{\xi}\hat{\alpha}\, d\xi\,d\zeta - \zeta \int_{D}\partial_{\zeta}\hat{\alpha}\, d\xi\,d\zeta.
\end{equation}
Then,
\begin{align*}
R_{0}\,\Bigg(P\, \Bigg(\begin{array}{c}
0\\
\xi\\
\zeta
\end{array}\Bigg) + \hat{g}\,\bigg|\,\partial_{\xi}\hat{\alpha}\,\bigg|\,\partial_{\zeta}\hat{\alpha}\Bigg)\,
R_{0}^{T} 
=&\, R_{0}\,\Bigg(P\, \Bigg(\begin{array}{c}
0\\
\xi\\
\zeta
\end{array}\Bigg)\,\bigg|\,\partial_{\xi}\tilde{\alpha}\,\bigg|\,\partial_{\zeta}\tilde{\alpha}\Bigg)\,
R_{0}^{T}\\ +& \,R_0\,\left(\hat{g}\,\Big|\,\int_{D}\partial_{\xi}\hat{\alpha}\, d\xi\,d\zeta\,\bigg|\,\int_{D}\partial_{\zeta}\hat{\alpha}\, d\xi\,d\zeta\right)\,R_{0}^{T}\\
=:&\,\, \tilde{G} + Z.
\end{align*}
By expanding the quadratic form $Q_{3}$, we have
\begin{equation}\label{Q3}
\int_{D} Q_{3}(G)d\xi\,d\zeta = \int_{D} Q_{3}(\tilde{G})d\xi\,d\zeta + \int_{D} Q_{3}(Z)d\xi\,d\zeta 
\geq \int_{D} Q_{3}(\tilde{G})d\xi\,d\zeta, 
\end{equation}
where we used (\ref{dom2}), the fact that $\partial_{\xi}\tilde{\alpha}$ and $\partial_{\zeta}\tilde{\alpha}$ 
have zero average on $D$ and the non negativity of $Q_{3}$. 
From this inequality the thesis follows immediately. \\
Notice that, due to the nontrivial geometry of the body, the limit energy depends on the position over the curve $\gamma$ even for a homogeneous material.
\end{rem}

%%%%%%%%%%%%%%%%%%%%%%%%%%%%%%%%%%%%    CASO  ISOTROPO   %%%%%%%%%%%%%%%%%%%%%%%%%%%%%%

\begin{rem}[Homogeneous and isotropic rods]
Assume the density $W$ is homogeneous and isotropic, that is,
\begin{equation*}
W(F) = W(FR) \quad\mbox{for every} \,\, R\in SO(3).
\end{equation*}
Then the quadratic form $Q_{3}$ is given by
\begin{equation*}
Q_{3}(G) = 2\,\mu\,\bigg|\frac{G + G^{T}}{2}\bigg|^{2} + \lambda\,(\mbox{tr}\, G)^{2}
\end{equation*}
for some constants $\lambda,\mu \,\in \mathbb{R}$. It is easy to show that for all 
$G \in\mathbb{M}^{3\times 3}$ and $R\in SO(3)$
\begin{equation*}
Q_{3}(R\,G\,R^{T}) = Q_{3}(G),
\end{equation*}
and so, formula (\ref{(a)}) reduces to 
\begin{align*}
Q_{2}(P) =&\, \inf_{\hat{\alpha}\in W^{1,2}(D;\mathbb{R}^{3})}\Bigg\{\int_{D} Q_{3} \Bigg(P\, \Bigg(\begin{array}{c}
0\\
\xi\\
\zeta
\end{array}\Bigg)\,\bigg|\,\partial_{\xi}\hat{\alpha}\,\bigg|\,\partial_{\zeta}\hat{\alpha}\Bigg)\,d\xi\,d\zeta\Bigg\}\\
=&\, \frac{1}{2\,\pi}\,\frac{\mu(3\,\lambda + 2\,\mu)}{\lambda + \mu}\,(p_{12}^2 + p_{13}^2) + \frac{\mu}{2\,\pi}\,p_{23}^2,
\end{align*}
where the last equality follows from \cite[Remark 3.5]{MM03}. 
This means that in the case of a homogeneous and isotropic material the quadratic form $Q_2$
is exactly the same as in the case of a straight rod treated in \cite{MM03}. 
\end{rem}

%%%%%%%%%%%%%%%%%%%%%%%%%%%%%%%%%%%   SEZIONE  CIRCOLARE  %%%%%%%%%%%%%%%%%%%%%%%%%%%%

\begin{rem}[Homogeneous rods with a circular cross section]
Assume that the cross section $D$ is a circle of radius $\frac{1}{\sqrt{\pi}}$ centred at the origin.
In this case, the quadratic form $Q_2$ can be computed by a pointwise minimization. 
More precisely, for every $s$ and for every $P$,
\begin{equation*}
Q_{2}(s,P) = \frac{1}{4\pi}\min_{u,v,w}\left\{Q_{3}\Bigg(R_{0}(s)\,\Bigg(\begin{array}{c}
p_{12} \\
0\\
- p_{23}
\end{array}\,\Bigg|\,u\,
\bigg|\,v \Bigg) R_{0}^{T}(s)\Bigg) + Q_{3}\Bigg(R_{0}(s)\,\Bigg(\begin{array}{c}
p_{13}\\
p_{23}\\
0
\end{array}\,\Bigg|\, v\,
\bigg|\, w\Bigg) R_{0}^{T}(s)\Bigg)\right\}.
\end{equation*}
The proof is completely analogous to \cite[Remark 3.6]{MM03}.
\end{rem}

%%%%%%%%%%%%%%%%%%%%%%%%%%%%%%%%%%%%%%%%%%%%%%%%%%%%%%%%%%%%%%%%%%%%%%%%%%%%%%%%%%%%%%%%
%%%%%%%%%%%%%%%%%%%%%%%%%%%%%%%%%    CORDA CURVA    %%%%%%%%%%%%%%%%%%%%%%%%%%%%%%%%%%%%
%%%%%%%%%%%%%%%%%%%%%%%%%%%%%%%%%%%%%%%%%%%%%%%%%%%%%%%%%%%%%%%%%%%%%%%%%%%%%%%%%%%%%%%%

\section{Lower scalings of the energy}
\noindent
The content of this section is the study of the asymptotic behaviour of the functionals 
$\tilde{I}^{(h)}/h^\alpha$\, for \,$0\leq\alpha<2$, as $h\rightarrow 0$.
In addition to conditions (i)-(v) of Section $2$ we assume also that 
$W(z,F) = W(z_{1},F)$ for every $z=(z_{1},z_{2},z_{3})\in\mathbb{R}^{3}$ and every 
$F\in \mathbb{M}^{3\times 3}$, and that  
\begin{align*}
(\textnormal{vi})& \,\, \exists \, C_3 > 0 \,\, \mbox{ independent of $z_1$ such that }\,
W(z_{1},F) \leq \,C_3\,\mbox{dist}^{2}(F,SO(3)) \,\, \mbox{for a.e. $z_1$ } \\
&\mbox{ and every} \, F\in\mathbb{M}^{3\times 3}.
\end{align*}
It is convenient to write the functionals $\tilde{I}^{(h)}$ as integrals over 
the fixed domain $\Omega = \big(\Psi^{(h)}\big)^{-1}\big(\tilde{\Omega}_{h}\big)$. Changing variables as in (\ref{flower2}) and setting $v:= \tilde{v}\circ \Psi^{(h)}$, we have
\begin{equation*}
\tilde{I}^{(h)}(\tilde{v}) = \int_{\Omega} W\big(s,\big(\nabla_h v\big)\,\big(\nabla_h\Psi^{(h)}\big)^{-1}\big)\det
\big(\nabla_{h}\Psi^{(h)}\big)
ds\,d\xi\,d\zeta =: \tilde{J}^{(h)}(v).
\end{equation*}
We extend the functional to the space $L^{2}(\Omega;\mathbb{R}^{3})$, setting
\begin{equation*}
J^{(h)}(v) =
\left\{
\begin{array}{ll}
\vspace{.1cm}
\tilde{J}^{(h)}(v) & \mbox{if }  v\in W^{1,2}(\Omega;\mathbb{R}^{3}),\\
\displaystyle + \infty & \mbox{otherwise in} \, L^{2}(\Omega;\mathbb{R}^{3}).
\end{array}
\right.
\end{equation*}
The aim of this section is to determine the $\Gamma$-limit of 
\,$J^{(h)}/h^\alpha$,\,for \,$0\leq\alpha<2$, as 
$h\rightarrow 0$, with respect to the strong topology of $L^{2}$. 

%%%%%%%%%%%%%%%%%%%% COMPACTNESS AND CONVERGENCE %%%%%%%%%%%%%%%%%%%%%%%%%%%%%%%%%%%%%%%%%%%%
\subsection{Derivation of the nonlinear theory for curved strings}
\noindent
For this first part we specify $\alpha = 0$, so we are interested in the asymptotic behaviour 
of the functionals representing the energy per unit volume associated to a deformation of the 
reference configuration.
\begin{thm}[Compactness]\label{comp1}
For every sequence $\big(v^{(h)}\big)$ in
$L^{2}(\Omega;\mathbb{R}^{3})$ such
that
\begin{equation}\label{fin}
J^{(h)}\big(v^{(h)}\big) \leq c < +\infty
\end{equation}
there exist a function $v\in W^{1,2}((0, L);\mathbb{R}^{3})$ and some
constants $c^{(h)}\in\mathbb{R}^3$ such that, up to subsequences,
\begin{equation*}
v^{(h)} - c^{(h)} \rightharpoonup v \quad \mbox{weakly in }\,
W^{1,2}(\Omega;\mathbb{R}^{3}). 
\end{equation*}
\end{thm}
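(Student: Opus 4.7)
The plan is to exploit the weaker energy bound (without the $h^{-2}$ factor) to obtain a standard Sobolev bound on $v^{(h)}$, and then to observe that the partial derivatives in the transverse directions must vanish in the limit, forcing the limit to depend only on the longitudinal variable $s$.

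First I would use the coercivity assumption (v) together with the change of variables built into $J^{(h)}$ to deduce
\[
\int_{\Omega}\mathrm{dist}^{2}\bigl((\nabla_{h}v^{(h)})(\nabla_{h}\Psi^{(h)})^{-1},SO(3)\bigr)\det(\nabla_{h}\Psi^{(h)})\,ds\,d\xi\,d\zeta\le \frac{1}{C_{2}}J^{(h)}(v^{(h)})\le c.
\]
Since $\det(\nabla_{h}\Psi^{(h)})\to 1$ uniformly by (\ref{convdet}), this upper bound transfers (possibly up to a harmless constant) to the integral without the Jacobian. Combining with the elementary inequality $\mathrm{dist}^{2}(F,SO(3))\ge \tfrac12|F|^{2}-3$ and with the uniform boundedness of $(\nabla_{h}\Psi^{(h)})^{-1}$ (from (\ref{invA})), I would conclude
\[
\int_{\Omega}\bigl|\nabla_{h}v^{(h)}\bigr|^{2}\,ds\,d\xi\,d\zeta\le c.
\]

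This single estimate contains all the information I need. Unpacking the definition of $\nabla_{h}$, it yields simultaneously
\[
\|\partial_{s}v^{(h)}\|_{L^{2}(\Omega)}\le c,\qquad \|\partial_{\xi}v^{(h)}\|_{L^{2}(\Omega)}\le c\,h,\qquad \|\partial_{\zeta}v^{(h)}\|_{L^{2}(\Omega)}\le c\,h.
\]
In particular $\nabla v^{(h)}$ is bounded in $L^{2}(\Omega;\mathbb{M}^{3\times3})$. Choosing the translation constants $c^{(h)}:=\fint_{\Omega}v^{(h)}\,ds\,d\xi\,d\zeta$ and invoking the Poincar\'e--Wirtinger inequality on the fixed domain $\Omega$, I obtain a uniform bound $\|v^{(h)}-c^{(h)}\|_{W^{1,2}(\Omega;\mathbb{R}^{3})}\le c$. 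Therefore, up to a subsequence, $v^{(h)}-c^{(h)}\rightharpoonup v$ weakly in $W^{1,2}(\Omega;\mathbb{R}^{3})$ for some limit $v$.

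It remains to check that $v$ depends only on $s$. From the bounds above $\partial_{\xi}v^{(h)}\to 0$ and $\partial_{\zeta}v^{(h)}\to 0$ strongly in $L^{2}(\Omega;\mathbb{R}^{3})$, so the weak limits $\partial_{\xi}v$ and $\partial_{\zeta}v$ must vanish. Hence $v$ is independent of the cross-section variables and can be identified with an element of $W^{1,2}((0,L);\mathbb{R}^{3})$, which is the claimed conclusion. No real obstacle is expected here: the rigidity theorem used in Theorem \ref{compactness} is not needed, because we only extract a Sobolev bound from the energy rather than an almost-isometric structure, and the geometry of the curved reference configuration enters merely through the uniform bounds (\ref{convdet}) and (\ref{invA}) on $\nabla_{h}\Psi^{(h)}$.
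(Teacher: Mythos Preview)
Your argument is correct and follows essentially the same route as the paper: coercivity of $W$ plus the uniform bounds on $\nabla_{h}\Psi^{(h)}$ and its determinant yield an $L^{2}$ bound on $\nabla_{h}v^{(h)}$, Poincar\'e--Wirtinger with the mean values $c^{(h)}$ gives the $W^{1,2}$ compactness, and the $O(h)$ transverse derivatives force the limit to depend only on $s$. The paper's proof is slightly terser but the logic and the ingredients are identical.
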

\begin{proof}
Let $\big(v^{(h)}\big)$ be a sequence in $L^{2}(\Omega;\mathbb{R}^{3})$ 
satisfying (\ref{fin}). From the definition of the functional we have immediately 
that $v^{(h)}\in W^{1,2}(\Omega;\mathbb{R}^{3})$. 
The assumptions on $W$ and the uniform boundedness of $\big(\nabla_{h}\Psi^{(h)}\big)^{-1}$ and of $\det \big(\nabla_{h}\Psi^{(h)}\big)$ give the boundedness in $L^{2}(\Omega;\mathbb{M}^{3\times 3})$ of $\big(\nabla_{h}v^{(h)}\big)$ and hence of $\big(\nabla v^{(h)}\big)$.
Therefore, using the Poincar\'e inequality
\begin{equation*}
\big|\big|v^{(h)} - c^{(h)}\big|\big|_{L^{2}(\Omega;\mathbb{R}^{3})} \leq \big|\big|\nabla v^{(h)}\big|\big|_{L^{2}(\Omega;\mathbb{M}^{3\times 3})},
\end{equation*}
where $c^{(h)}\in\mathbb{R}^3$ is the mean value of $v^{(h)}$ over $\Omega$,
it turns out that the sequence $v^{(h)} - c^{(h)}$ is bounded in $W^{1,2}(\Omega;\mathbb{R}^{3})$; hence there exists a function $v \in W^{1,2}(\Omega;\mathbb{R}^{3})$ such that, up to subsequences,
$$v^{(h)} - c^{(h)}\rightharpoonup v \quad\mbox{weakly in}\,W^{1,2}(\Omega;\mathbb{R}^{3}).$$
Moreover since $\big(\nabla_{h} v^{(h)}\big)$ is bounded in $L^{2}(\Omega;\mathbb{M}^{3\times 3})$, we have
$$\partial_{\xi} v^{(h)} \rightarrow 0\quad \mbox{and}\quad  \partial_{\zeta} v^{(h)} \rightarrow 0 \quad\mbox{strongly in}\,L^{2}(\Omega;\mathbb{R}^{3}).$$
Therefore the limit function $v$ depends only on the first variable. 
\end{proof}
%%%%%%%%%%%%%%%%%%%%%%%%%%%%%   Gamma convergenza %%%%%%%%%%%%%%%%%%%%%%%%%%%%%%%%%%%%%%%%%%%%%
\begin{thm}[$\Gamma$-convergence]\label{Gamcon}
Let $I$ be the functional defined as 
\begin{equation}\label{Glim}
I(v) =
\left\{
\vspace{.5cm}
\begin{array}{ll}
\displaystyle \int_{0}^{L} W_{0}^{**}(s,v'(s))\,ds & \mbox{if } \, v\in W^{1,2}((0,L);\mathbb{R}^3),\\
\displaystyle + \infty & \mbox{otherwise in }\, L^{2}(\Omega;\mathbb{R}^3), 
\end{array}
\right.
\end{equation}
where $W_{0}^{**}$ is given by the convex envelope of the function
$W_{0}: [0, L]\times \mathbb{R}^{3}\rightarrow \mathbb{R}$ defined as 
$$W_0(s,z):= \inf \big\{ W\left(s,(z\,|\,y_{2}\,|\,y_{3}) R_{0}^{T}(s)\right): y_{2},y_{3} \in \mathbb{R}^{3}\big\}.$$
Then 
\begin{equation*}
\Gamma-\lim_{h\rightarrow 0} J^{(h)} = I, 
\end{equation*}
i.e., the following conditions are satisfied:\\
(i)(liminf inequality) for every $v\in L^{2}(\Omega;\mathbb{R}^{3})$ and every sequence 
$\big(v^{(h)}\big)\subset L^{2}(\Omega;\mathbb{R}^{3})$ such that 
$v^{(h)}  \rightarrow v$ strongly in $L^{2}(\Omega;\mathbb{R}^{3})$, it turns out that
\begin{equation}\label{linf}
I(v)\leq\liminf_{h \rightarrow 0} J^{(h)}\big(v^{(h)}\big);
\end{equation}
(ii)(limsup inequality) for every $v\in L^{2}(\Omega;\mathbb{R}^{3})$ there exists a sequence 
$\big(v^{(h)}\big)\subset L^{2}(\Omega;\mathbb{R}^{3})$ converging strongly to $v$ 
in $L^{2}(\Omega;\mathbb{R}^{3})$ such that
\begin{equation}\label{55}
\limsup_{h \rightarrow 0}
J^{(h)}\big(v^{(h)}\big)\leq I(v).
\end{equation}
\end{thm}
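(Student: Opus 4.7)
The plan is to establish the $\Gamma$-convergence by verifying the liminf and limsup inequalities separately, following the strategy of Acerbi--Buttazzo--Percivale \cite{ABP91} for straight strings, with the curved geometry handled via the uniform convergence $(\nabla_h \Psi^{(h)})^{-1} \to R_0^T$ and $\det(\nabla_h \Psi^{(h)}) \to 1$. The bulk density $W_0$ need not be convex, so the convex envelope $W_0^{**}$ emerges naturally from the $L^2$-weak lower semicontinuity in the liminf and from an oscillatory construction in the limsup. Before proceeding I would note that $W_0$ is a Carath\'eodory function, $W_0(s,\cdot)$ has quadratic growth thanks to (vi), and hence $W_0^{**}(s,\cdot)$ is finite, continuous and convex on $\mathbb{R}^3$; measurability in $s$ follows by writing $W_0^{**}$ as a countable supremum of affine functions.

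For the liminf inequality, I would take $v^{(h)} \to v$ in $L^2(\Omega;\mathbb{R}^3)$ with $\liminf J^{(h)}(v^{(h)}) < +\infty$ (otherwise there is nothing to prove). Theorem \ref{comp1} provides constants $c^{(h)}$ such that, up to subsequence, $v^{(h)} - c^{(h)} \rightharpoonup v$ weakly in $W^{1,2}(\Omega;\mathbb{R}^3)$ with $v$ depending only on $s$; the strong $L^2$ convergence fixes the constants. Changing variables as in (\ref{flower2}), set $F^{(h)} := \nabla_h v^{(h)} (\nabla_h \Psi^{(h)})^{-1}$ and $Y^{(h)} := F^{(h)} R_0$. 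Using (\ref{invA}) and the $L^2$-boundedness of $\nabla_h v^{(h)}$ (provided by (v) together with the energy bound), one has $Y^{(h)} e_1 = \partial_s v^{(h)} + o_{L^2}(1)$, so $Y^{(h)} e_1 \rightharpoonup v'$ weakly in $L^2$. Since $F^{(h)} = Y^{(h)} R_0^T$, the definition of $W_0$ yields $W(s, F^{(h)}) \geq W_0(s, Y^{(h)} e_1) \geq W_0^{**}(s, Y^{(h)} e_1)$ pointwise; multiplying by $\det(\nabla_h\Psi^{(h)}) \to 1$ uniformly and using that the right-hand side is bounded in $L^1$, one gets
\[
J^{(h)}(v^{(h)}) \geq \int_\Omega W_0^{**}(s, Y^{(h)} e_1)\, ds\, d\xi\, d\zeta + o(1).
\]
The classical weak-$L^2$ lower semicontinuity of integrals with convex, non-negative, Carath\'eodory density then gives $\liminf J^{(h)}(v^{(h)}) \geq \int_\Omega W_0^{**}(s, v'(s))\, ds\, d\xi\, d\zeta = \int_0^L W_0^{**}(s, v'(s))\, ds$, using $\mathcal{L}^2(D)=1$.

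For the limsup, by density of piecewise affine functions in $W^{1,2}((0,L);\mathbb{R}^3)$ and a standard diagonal argument, it suffices to construct a recovery sequence when $v$ is piecewise affine with $v'$ taking a constant value $z_k$ on each of finitely many subintervals $I_k$. On $I_k$, Carath\'eodory's theorem lets me write, up to an $\varepsilon$-error, $W_0^{**}(s, z_k) = \sum_{i=1}^4 \lambda_i^{(k)}(s) W_0(s, z_i^{(k)}(s))$ with convex coefficients summing to one and $\sum_i \lambda_i^{(k)} z_i^{(k)} = z_k$; for each $z_i^{(k)}$ select near-minimizers $y_{2,i}^{(k)}, y_{3,i}^{(k)} \in \mathbb{R}^3$ realizing $W_0$ within $\varepsilon$. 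Partition $I_k$ into cells of size $O(\sqrt{h})$, each cell further split into sub-cells of proportion $\lambda_i^{(k)}$, and define $v^{(h)}(s,\xi,\zeta)$ to be continuous and piecewise affine in $s$ with slope $z_i^{(k)}$ on the $i$-th sub-cell, plus the transverse correction $h\xi y_{2,i}^{(k)} + h\zeta y_{3,i}^{(k)}$. Then $\nabla_h v^{(h)}$ equals $(z_i^{(k)} | y_{2,i}^{(k)} | y_{3,i}^{(k)})$ on each sub-cell up to negligible boundary layers; the upper growth (vi) controls the contribution of these layers. Passing to the limit using $\det(\nabla_h \Psi^{(h)}) \to 1$, $(\nabla_h \Psi^{(h)})^{-1} \to R_0^T$ uniformly, the Lebesgue-type averaging on sub-cells and sending $\varepsilon \to 0$ via diagonalization yields (\ref{55}).

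The main obstacle is the limsup construction: one must simultaneously oscillate the slope of $v^{(h)}$ to realize the convex envelope, preserve continuity of $v^{(h)}$ across sub-cell interfaces (creating thin transition layers whose energetic cost must be shown negligible), and measurably select near-minimizers $y_{2,i}, y_{3,i}$ of the infimum defining $W_0$. The role of the quadratic upper growth (vi) is crucial here: it ensures that the energy in the transition layers is $O(h^{1/2})$, hence negligible in the limit, and that the dominated convergence argument for the bulk contribution goes through uniformly over the recovery family.
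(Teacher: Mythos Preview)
Your liminf argument is essentially the paper's: reduce to finite energy, invoke compactness, use the identity $\nabla_h v^{(h)}(\nabla_h\Psi^{(h)})^{-1}R_0 e_1 \rightharpoonup v'$ in $L^2$, bound $W$ below by $W_0^{**}$, and apply weak lower semicontinuity of the convex integral. No issue there.

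For the limsup you take a genuinely different route. The paper does \emph{not} build an oscillatory recovery sequence realising the convex envelope. Instead it uses the simple ansatz $v^{(h)}(s,\xi,\zeta)=v(s)+h\xi\,w_2(s)+h\zeta\,w_3(s)$ with $w_2,w_3\in W^{1,2}$, passes to the limit by dominated convergence (using (vi)) to obtain $\lim J^{(h)}(v^{(h)})=\int_0^L W(s,(v'|w_2|w_3)R_0^T)\,ds$, then optimises over $w_2,w_3$ via a measurable selection to get $\Gamma\text{-}\limsup J^{(h)}(v)\le \int_0^L W_0(s,v')\,ds$. The passage from $W_0$ to $W_0^{**}$ is then \emph{abstract}: since the $\Gamma\text{-}\limsup$ is $L^2$-lower semicontinuous, it is bounded by the $L^2$-relaxation of $v\mapsto\int_0^L W_0(s,v')\,ds$, which equals $I$ by a classical result (the paper cites \cite{DM93} and \cite[Lemma~5]{LDR95}). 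This is considerably shorter and sidesteps all the explicit oscillation machinery.

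Your direct construction is viable in principle, but there is a real gap in the sketch: you write the Carath\'eodory decomposition with $s$-dependent data $\lambda_i^{(k)}(s)$, $z_i^{(k)}(s)$, $y_{2,i}^{(k)}(s)$, $y_{3,i}^{(k)}(s)$, yet the partition of $I_k$ into sub-cells ``of proportion $\lambda_i^{(k)}$'' tacitly treats these as constants. Since $W_0(s,\cdot)$ genuinely depends on $s$ through $R_0(s)$, the decomposition varies with $s$ and there is no continuous (or even canonical measurable) choice of the $\lambda_i,z_i$; freezing them at, say, the left endpoint of each cell requires an additional approximation step (e.g.\ replace the selection by a simple function in $s$ and control the error via (vi) and dominated convergence). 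Similarly, the transverse corrections $h\xi\,y_{2,i}+h\zeta\,y_{3,i}$ inherit this $s$-dependence. None of this is insurmountable, but it has to be said, and once you add that layer of approximation you are effectively reproving the relaxation theorem the paper simply invokes.
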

\begin{rem}
Notice that, if $A:= (z\,|\,y_2\,|\,y_3)\,R_0^T$, then $A\,\tau = z$ and $A\,\nu_k = y_k$ for $k = 2,3$.
In other words, in the definition of $W_0$, the minimization is done with respect to the normal components 
of the matrix in the argument of $W$, keeping equal to $z$ the tangential component. 
\end{rem}
\begin{rem}\label{abp}
Observe that conditions (iv) and (v) imply that for a.e. $s\in [0, L]$,
\begin{equation}\label{ABP}
W_{0}^{**}(s,z) = 0 \quad \mbox{if and only if} \quad \snorm{z}\leq 1,
\end{equation}
(see \cite{ABP91}). 
\end{rem}
%%%%%%%%%%%%%%%%%%%%%%%%%%%%%%%%%%%%%%% BOUND FROM BELOW %%%%%%%%%%%%%%%%%%%%%%%%%%%%%%%%%%%%%%%%%
\begin{proof}(of Theorem \ref{Gamcon})
(i) Let $v$ and $v^{(h)}$ be as in the statement. We can assume that 
\begin{equation*}
\liminf_{h \rightarrow 0} J^{(h)}\big(v^{(h)}\big) < + \infty, 
\end{equation*}
otherwise (\ref{linf}) is trivial. Therefore, up to subsequences, (\ref{fin}) is satisfied. 
From Theorem \ref{comp1} we deduce that $v \in W^{1,2}((0,L);\mathbb{R}^3)$ and that the convergence 
is indeed weak in $W^{1,2}(\Omega;\mathbb{R}^{3})$.\\
Now define the function $W_{0}: [0, L]\times \mathbb{R}^{3}\rightarrow \mathbb{R}$ as
$$W_{0}(s,z):= \inf \big\{W\left(s,(z\,|\,y_{2}\,|\,y_{3}) R_{0}^{T}(s)\right): y_{2},y_{3} \in \mathbb{R}^{3}\big\}.$$
Due to the coercivity assumptions this function is finite.

Notice that, since $R_{0}\,R_{0}^{T} = Id$, we can write
\begin{equation*}
W\big(s,\nabla_{h}v^{(h)}\big(\nabla_{h}\Psi^{(h)}\big)^{-1}\big) =
W\Big(s,\nabla_{h}v^{(h)}\big(\nabla_{h}\Psi^{(h)}\big)^{-1}R_{0}R_{0}^{T}\Big)
\end{equation*}
and using the explicit expression of $\big(\nabla_{h}\Psi^{(h)}\big)^{-1}$ given in (\ref{invA}), i.e.,
\begin{equation*}
\big(\nabla_{h}\Psi^{(h)}\big)^{-1}(s,\xi,\zeta) =
R_{0}^{T}(s) - h\,R_{0}^{T}(s)\,\big[(\xi\,\nu'_{2}(s) +
\zeta\,\nu'_{3}(s))\otimes e_1\big] R_{0}^{T}(s) + O(h^{2}),
\end{equation*}
we have
\begin{equation}\label{convVp}
\nabla_{h} v^{(h)}\big(\nabla_{h}\Psi^{(h)}\big)^{-1}R_{0}e_1 \rightharpoonup v' \quad \mbox{weakly in } \, L^{2}(\Omega;\mathbb{R}^3).
\end{equation}
So, from the definition of $W_{0}$
\begin{eqnarray*}
J^{(h)}\big(v^{(h)}\big)  &\geq&
\int_{\Omega} W_{0}\big(s,\nabla_{h} v^{(h)}\big(\nabla_{h}\Psi^{(h)}\big)^{-1}R_{0}e_1\big)\,\det \big(\nabla_{h}\Psi^{(h)}\big)\,ds\,d\xi\,d\zeta \\
&\geq& \int_{\Omega} W_{0}^{**}\big(s,\nabla_{h} v^{(h)}\big(\nabla_{h}\Psi^{(h)}\big)^{-1}R_{0}e_1\big)\,\det \big(\nabla_{h}\Psi^{(h)}\big)\,ds\,d\xi\,d\zeta.
\end{eqnarray*}
Now we pass to the $\liminf$ in both sides of the previous inequality, using the uniform 
convergence of the determinant remarked in (\ref{convdet}), and we get
\begin{eqnarray*}
\liminf_{h\rightarrow 0}\,J^{(h)}\big(v^{(h)}\big)&\geq&
\liminf_{h\rightarrow 0}\int_{\Omega} W_{0}^{**}\big(s,\big(\nabla_{h} v^{(h)}\big(\nabla_{h}\Psi^{(h)}\big)^{-1}R_{0}\big)e_1\big)\,\det \big(\nabla_{h}\Psi^{(h)}\big)\,ds\,d\xi\,d\zeta \\
&=& \liminf_{h\rightarrow 0}\int_{\Omega}W_{0}^{**}\big(s,\big(\nabla_{h} v^{(h)}\big(\nabla_{h}\Psi^{(h)}\big)^{-1}R_{0}\big)e_1\big)\,ds\,d\xi\,d\zeta.
\end{eqnarray*}
Since the functional
$$G(u):= \int_{\Omega} W_{0}^{**}(s, u)\,ds\,d\xi\,d\zeta$$
is convex, it is sequentially weakly lower semicontinuous in $L^{2}(\Omega;\mathbb{R}^{3})$; 
so, by (\ref{convVp}) we can conclude that
\begin{equation}\label{supe1}
\liminf_{h\rightarrow 0}\,J^{(h)}\big(v^{(h)}\big) \geq \int_{0}^{L} W_{0}^{**}(s,v'(s))\,ds.
\end{equation}
%%%%%%%%%%%%%%%%%%%%%%%%%%%%%%%%%%  BOUND FROM ABOVE  %%%%%%%%%%%%%%%%%%%%%%%%%%%%%%%%%%%%%%%
(ii) Let $v$ be a function in $W^{1,2}((0, L);\mathbb{R}^3)$, otherwise the bound in (\ref{55}) is trivial. 
Let $w_{2}, w_{3} \in W^{1,2}((0, L);\mathbb{R}^{3})$ be arbitrary functions and consider the functions 
$v^{(h)}: \Omega\rightarrow \mathbb{R}^{3}$ defined by
\begin{equation*}
v^{(h)}(s,\xi,\zeta) := v(s) + h\,\xi\,w_{2}(s) + h\,\zeta\,w_{3}(s).
\end{equation*}
Clearly, as $\nabla v^{(h)} = v'\otimes e_1 + h\,\big(\xi\,w'_2 + \zeta\,w'_3\,|\,w_2\,|\,w_3\big)$, we have that
\begin{equation}\label{(b)}
v^{(h)}\rightarrow v\quad \mbox{strongly in}\, W^{1,2}(\Omega;\mathbb{R}^{3}).
\end{equation}
Now we want to study the behaviour of the sequence 
\begin{equation*}
J^{(h)}\big(v^{(h)}\big) = \int_{\Omega}W\big(s,(\nabla_{h} v^{(h)})\big(\nabla_{h}\Psi^{(h)}\big)^{-1}) \det \big(\nabla_{h}\Psi^{(h)}\big) ds\,d\xi\,d\zeta
\end{equation*}
when $h\rightarrow 0$. Notice that the scaled gradient of $v^{(h)}$ satisfies 
\begin{equation}\label{almever}
\nabla_{h} v^{(h)} = (v'\,|\,w_2\,|\,w_3) + 
h\, (\xi\,w'_2 + \zeta\, w'_3)\otimes e_1 \rightarrow (v'\,|\,w_2\,|\,w_3) \,\, \mbox{a.e.}.
\end{equation}
So, by (\ref{convdet}) and (vi), using the dominated convergence theorem we get
\begin{align*}
\lim_{h\rightarrow 0}J^{(h)}\big(v^{(h)}\big) &=\, \lim_{h\rightarrow 0}\int_{\Omega} W\big(s,(\partial_{s} v^{(h)}\,|\,w_{2}\,|\,w_{3})\big(\nabla_{h}\Psi^{(h)}\big)^{-1}) \det \big(\nabla_{h}\Psi^{(h)}\big)\, ds\,d\xi\,d\zeta\\
&=\, \int_{0}^{L} W\big(s,(v'\,|\,w_{2}\,|\,w_{3})\,R_{0}^{T})\, ds.
\end{align*}
Up to now we have shown that for every choice of $w_{2}, w_{3} \in W^{1,2}((0, L);\mathbb{R}^{3})$, there exists a sequence $\big(v^{(h)}\big)$ such that (\ref{(b)}) is satisfied and 
\begin{equation*}
\lim_{h\rightarrow 0}J^{(h)}\big(v^{(h)}\big)  = \int_{0}^{L} W\big(s,(v'\,|\,w_{2}\,|\,w_{3})\,R_{0}^{T}) ds.
\end{equation*}
Therefore,
\begin{align}\label{densi}
\Gamma-\limsup_{h\rightarrow 0} J^{(h)}(v)&:=\,
\inf\left\{\limsup_{h\rightarrow 0}J^{(h)}\big(u^{(h)}\big):
u^{(h)} \rightarrow v \,\, \mbox{strongly in} \,\, L^{2}(\Omega;\mathbb{R}^{3})\right\}\nonumber\\ 
&\leq \inf\left\{\int_{0}^{L} W\big(s,(v'\,|\,w_{2}\,|\,w_{3})\,R_{0}^{T})\,ds :
w_{2}, w_{3} \in W^{1,2}((0, L);\mathbb{R}^{3})\right\}\nonumber\\
&= \inf\left\{\int_{0}^{L} W\big(s,(v'\,|\,w_{2}\,|\,w_{3})\,R_{0}^{T})\,ds :
w_{2}, w_{3} \in L^{2}((0, L);\mathbb{R}^{3})\right\},
\end{align}
where the last equality is a consequence of the dominated convergence theorem and of the density of $W^{1,2}((0, L);\mathbb{R}^{3})$ in $L^{2}((0, L);\mathbb{R}^{3})$. 

By the measurable selection lemma (see for example \cite{EkTe}) applied to the Carath\'eodory function
$$g:[0, L]\times\mathbb{R}^{3}\times\mathbb{R}^{3}\rightarrow \mathbb{R},\quad (s,y_{2},y_{3})\mapsto g(s,y_{2},y_{3}):=  W\big(s,(v'(s)\,|\,y_{2}\,|\,y_{3})R_{0}^{T}(s))$$
we obtain the existence of two measurable functions $w^{0}_{2}, w^{0}_{3}: [0, L]\rightarrow \mathbb{R}^{3}$ satisfying
\begin{equation*}
W\big(s,(v'(s)\,|\,w^{0}_{2}(s)\,|\,w^{0}_{3}(s))R_{0}^{T}(s)) =\inf_{y_{2},y_{3}\in \mathbb{R}^{3}} W\big(s,(v'(s)\,|\,y_{2}\,|\,y_{3})R_{0}^{T}(s)) = W_{0}(s,v'(s)).
\end{equation*}
Moreover, from the coerciveness of $W$ it follows that $w^{0}_{2}, w^{0}_{3}$ belong indeed to $L^{2}((0, L);\mathbb{R}^{3})$ and so they are in competition for the infimum in (\ref{densi}). 
Hence, for every $v\in W^{1,2}((0, L);\mathbb{R}^{3})$ we have
\begin{equation*}
\Gamma-\limsup_{h\rightarrow 0} J^{(h)}(v) \leq \int_{0}^{L} W_{0}(s,v'(s))\,ds =: \tilde{J}(v).
\end{equation*}
Now define the functional
\begin{equation}\label{Glim2}
J(v) =
\left\{
\vspace{.5cm}
\begin{array}{ll}
\displaystyle \tilde{J}(v) & \mbox{if } \, v\in W^{1,2}((0,L);\mathbb{R}^3),\\
\displaystyle + \infty & \mbox{otherwise in }\, L^{2}(\Omega;\mathbb{R}^3);
\end{array}
\right.
\end{equation}
clearly it turns out that 
\begin{equation}\label{darelax} 
\Gamma-\limsup_{h\rightarrow 0} J^{(h)}(v) \leq J(v) \quad\mbox{for every} \,\, v\in L^{2}(\Omega;\mathbb{R}^{3}).
\end{equation} 
As the lower semicontinuous envelope of $J$ with respect to the strong topology of $L^{2}(\Omega;\mathbb{R}^{3})$ is given by the functional $I$ (see \cite{DM93} and \cite[Lemma 5]{LDR95}), the thesis follows immediately from (\ref{darelax}).
\end{proof}

%%%%%%%%%%%%%%%%%%%%%%%%%%%%%%%%%%%%%%%%%%%%%%%%%%%%%%%%%%%%%%%%%%%%%%%%%%%%%%%%%%%%%%
%%%%%%%%%%%%%%%%%%%%%%%%%%%%%%   RISCALAMENTO INTERMEDIO  %%%%%%%%%%%%%%%%%%%%%%%%%%%%%
%%%%%%%%%%%%%%%%%%%%%%%%%%%%%%%%%%%%%%%%%%%%%%%%%%%%%%%%%%%%%%%%%%%%%%%%%%%%%%%%%%%%%%%

\subsection{Intermediate scaling }
\noindent
In this subsection we show that scalings of the energy of order $h^\alpha$, with $\alpha\in (0,2)$, lead to a trivial $\Gamma$-limit.

%%%%%%%%%%%%%%%%%%%%%%%%%%%%%%%%%%%%  COMPACTNESS %%%%%%%%%%%%%%%%%%%%%%%%%%%%%%%%%
\begin{thm}[Compactness and $\Gamma$- convergence]
Let $\mathcal{W}_1$ be the class of functions defined as
\begin{equation}\label{defW1M}
\mathcal{W}_1:= \{v\in W^{1,2}((0,L);\mathbb{R}^{3}) : |v'(s)| \leq 1 \,\textnormal{a.e.}\}.
\end{equation}
For every sequence $\big(v^{(h)}\big)$ in
$L^{2}(\Omega;\mathbb{R}^{3})$ such
that
\begin{equation}\label{fin2}
\frac{1}{h^\alpha}\,J^{(h)}\big(v^{(h)}\big) \leq c < +\infty
\end{equation}
there exist a function $v\in \mathcal{W}_{1}$ and some
constants $c^{(h)}\in\mathbb{R}$ such that, up to subsequences,
\begin{equation*}
v^{(h)} - c^{(h)} \rightharpoonup v \quad \mbox{weakly in }\,
W^{1,2}(\Omega;\mathbb{R}^{3}). 
\end{equation*}
Moreover, 
\begin{equation}\label{(c)}
\Gamma-\lim_{h \rightarrow 0}\, \frac{1}{h^\alpha}\,J^{(h)} =
\left\{
\vspace{.3cm}
\begin{array}{ll}
\vspace{.15cm}
\quad 0 &  \textnormal{in} \,\, \mathcal{W}_1,\\
\displaystyle + \infty & \textnormal{otherwise in} \,\, L^{2}(\Omega;\mathbb{R}^{3}).
\end{array}
\right.
\end{equation}
\end{thm}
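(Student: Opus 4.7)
My plan is to deduce compactness and the $\Gamma$-liminf directly from the lower-bound machinery already in Theorems \ref{comp1} and \ref{Gamcon}, and to produce the recovery sequence by applying the bending-theory construction of Theorem \ref{bfa} to unit-speed smooth approximations of $v$.

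\smallskip
\emph{Compactness and liminf.} Since $(1/h^\alpha)J^{(h)}(v^{(h)})\le c$ with $\alpha\ge 0$ implies $J^{(h)}(v^{(h)})\le c\,h^\alpha\le c$ for $h$ small, Theorem \ref{comp1} yields $v\in W^{1,2}(\Omega;\mathbb{R}^3)$ depending only on $s$ and constants $c^{(h)}\in\mathbb{R}^3$ with $v^{(h)}-c^{(h)}\rightharpoonup v$ weakly in $W^{1,2}$ along a subsequence. Repeating the chain of inequalities used in Theorem \ref{Gamcon}, by the definitions of $W_0,W_0^{**}$ and frame indifference,
\begin{equation*}
\int_\Omega W_0^{**}\bigl(s,\nabla_h v^{(h)}(\nabla_h\Psi^{(h)})^{-1}R_0 e_1\bigr)\det(\nabla_h\Psi^{(h)})\,ds\,d\xi\,d\zeta\le J^{(h)}(v^{(h)})\le c\,h^\alpha\longrightarrow 0.
\end{equation*}
Using \eqref{convVp}, the weak lower semicontinuity of $u\mapsto\int_\Omega W_0^{**}(s,u)\,ds\,d\xi\,d\zeta$, and $\det(\nabla_h\Psi^{(h)})\to 1$ uniformly, one gets $\int_0^L W_0^{**}(s,v'(s))\,ds\le 0$; hence $W_0^{**}(s,v'(s))=0$ a.e., and Remark \ref{abp} gives $v\in\mathcal{W}_1$. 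The $\Gamma$-liminf inequality follows: $\liminf(1/h^\alpha)J^{(h)}\ge 0$ is trivial on $\mathcal{W}_1$; off $\mathcal{W}_1$ a finite $\liminf$ together with $v^{(h)}\to v$ in $L^2$ would, via the compactness just proved, place the weak $W^{1,2}$-limit $\bar v$ in $\mathcal{W}_1$ and force $c^{(h)}\to v-\bar v=:c$, giving $v=\bar v+c\in\mathcal{W}_1$, a contradiction.

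\smallskip
\emph{Limsup.} Fix $v\in\mathcal{W}_1$. For the smooth case $v\in C^2([0,L];\mathbb{R}^3)$ with $|v'|\equiv 1$, pick $d_2,d_3\in C^1$ with $(v'|d_2|d_3)\in SO(3)$ (by a smooth Gram-Schmidt completion), so that $(v,d_2,d_3)\in\mathcal{A}$; Theorem \ref{bfa} then provides $\tilde v^{(h)}$ with $v^{(h)}:=\tilde v^{(h)}\circ\Psi^{(h)}\to v$ strongly in $W^{1,2}$ and $\tilde I^{(h)}(\tilde v^{(h)})/h^2\to I(v,d_2,d_3)<+\infty$, whence
\begin{equation*}
\frac{1}{h^\alpha}J^{(h)}(v^{(h)})=h^{2-\alpha}\cdot\frac{\tilde I^{(h)}(\tilde v^{(h)})}{h^2}\longrightarrow 0
\end{equation*}
since $\alpha<2$. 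For general $v\in\mathcal{W}_1$ I would approximate $v$ in $L^2((0,L);\mathbb{R}^3)$ by a sequence $v_n\in W^{2,2}((0,L);\mathbb{R}^3)$ with $|v_n'|\equiv 1$: starting from the Acerbi-Buttazzo-Percivale piecewise-affine zig-zag of \cite{ABP91} and replacing each corner by a short unit-speed circular arc, the resulting $v_n$ is $C^1$ and piecewise $C^2$ (hence $W^{2,2}$), remains unit-speed, and can be uniformly matched to $v$. A frame $d_2^{(n)},d_3^{(n)}\in W^{1,2}$ along $v_n'$ is obtained by Gram-Schmidt against a fixed reference direction transverse to $v_n'$, so that $(v_n,d_2^{(n)},d_3^{(n)})\in\mathcal{A}$. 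Applying the smooth case to each $v_n$ produces sequences $v_n^{(h)}\to v_n$ in $L^2$ with $(1/h^\alpha)J^{(h)}(v_n^{(h)})\to 0$; a standard diagonal extraction $v^{(h)}:=v_{n(h)}^{(h)}$ with $n(h)\to\infty$ slowly then yields the required recovery sequence.

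\smallskip
\emph{Main obstacle.} The delicate ingredient is the unit-speed $W^{2,2}$ approximation of an arbitrary $v\in\mathcal{W}_1$: the ABP zig-zag is only $W^{1,\infty}$, and one has to round its corners by unit-speed arcs at a scale much finer than the zig-zag step so that the curves lie in $\mathcal{A}$ and remain $L^2$-close to $v$. Although the second derivatives $\|v_n''\|_{L^2}$ grow as the rounding radius shrinks, the gain $h^{2-\alpha}\to 0$ in the formula above is more than sufficient to absorb them during the diagonal extraction, so the approximation argument is all that stands between the statement and a complete proof.
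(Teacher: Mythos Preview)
Your compactness and liminf arguments coincide with the paper's. For the limsup you take a genuinely different route. The paper does not pass through the bending theory at all: it completes $v'$ to a measurable $Co(SO(3))$-valued map $(v'|d_2|d_3)$, approximates it by piecewise constant rotations $R_j$ (so that the associated piecewise affine curves are $L^2$-close to $v$), and on each constant piece prescribes the explicit deformation $v^{(h)}=R_i\,(s,h\xi,h\zeta)^T$; at a jump between $R_1$ and $R_2$ it interpolates along a fixed smooth path $P:[0,1]\to SO(3)$ reparametrized over an interval of length $\omega(h)\sim h^\beta$, $0<\beta<2-\alpha$. Using the upper growth condition (vi) one then estimates directly $J^{(h)}(v^{(h)})\le C\,h^2+C\,h^2/\omega(h)$, so $(1/h^\alpha)J^{(h)}(v^{(h)})\to 0$, and a diagonal argument over $j$ concludes.

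Your route---approximate $v\in\mathcal W_1$ by unit-speed $W^{2,2}$ curves $v_n$, apply Theorem~\ref{bfa} to each $(v_n,d_2^{(n)},d_3^{(n)})\in\mathcal A$, and exploit the factor $h^{2-\alpha}$---is correct and conceptually cleaner, and has the bonus that it does not need assumption (vi), since Theorem~\ref{bfa} relies only on (i)--(v). The cost is the unit-speed $W^{2,2}$ approximation lemma; your sketch (piecewise-affine zig-zag to enforce $|v_n'|\equiv 1$, then unit-speed circular arcs at the corners, frame by Gram--Schmidt against a direction outside the compact image of $v_n'$ on $S^2$) is workable, and you are right that the blow-up of $\|v_n''\|_{L^2}$, hence of $I(v_n,d_2^{(n)},d_3^{(n)})$, is harmlessly absorbed by $h^{2-\alpha}$ during the diagonal extraction. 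The paper's explicit $SO(3)$-interpolation avoids this approximation step but in exchange builds the recovery sequence by hand and leans on (vi).
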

\begin{proof}Let $\big(v^{(h)}\big)$ be such that (\ref{fin2}) is satisfied. Then
\begin{equation}\label{boundalfa}
J^{(h)}\big(v^{(h)}\big) < c\,h^{\alpha}.
\end{equation}
By Theorem \ref{comp1} this implies that there exist $v\in W^{1,2}((0,L);\mathbb{R}^{3})$ and some constants $c^{(h)}\in\mathbb{R}$ such that the sequence $v^{(h)} - c^{(h)}$ converges to $v$ weakly in $W^{1,2}(\Omega;\mathbb{R}^{3})$. Moreover by Theorem \ref{Gamcon} and by (\ref{boundalfa})
\begin{equation*}
0 = \liminf_{h\rightarrow 0}J^{(h)}\big(v^{(h)}\big)  \geq \int_{0}^{L} W_{0}^{**}(s,v'(s))ds,
\end{equation*}
and this gives the additional condition that $\snorm{v'(s)}\leq 1$ for almost every $s\in [0,L]$, thanks to Remark \ref{abp}. Therefore $v\in \mathcal{W}_{1}$.\\
Let us prove (\ref{(c)}). The liminf inequality follows directly from the fact that the energy density $W$ is nonnegative and from the compactness. As for the limsup inequality we first notice that we can restrict our analysis to functions $v\in \mathcal{W}_{1}$, being the other case trivial. Since $\snorm{v'(s)}\leq 1$ for a.e. $s\in [0, L]$, there exist two measurable functions $d_{2}, d_{3}: [0, L]\rightarrow \mathbb{R}^3$ such that
\begin{equation*}
(v'(s)\,|\,d_{2}(s)\,|\,d_{3}(s))\in Co(SO(3))\quad \mbox{for a.e.}\,\, s\in [0,L],
\end{equation*}
where $Co(SO(3))$ denotes the convex hull of $SO(3).$ As first step, we assume in addition that $(v'\,|\,d_{2}\,|\,d_{3})$ is a piecewise constant rotation; for simplicity we can limit ourselves to the case
$$
(v'(s)\,|\,d_{2}(s)\,|\,d_{3}(s)) =
\left\{
\begin{array}{ll}
\vspace{.15cm}
R_{1} & \mbox{if }  s\in [0,s_{0}[,\\
R_{2} & \mbox{if }  s\in [s_{0},L]
\end{array}
\right.
$$
with $R_{1},R_{2} \in SO(3)$.
Now, let $\omega(h)$ be a sequence converging to zero, as $h\rightarrow 0$, and let $P$ be a smooth function \,$P:[0,1] \longrightarrow SO(3)$, such that $P(0) = R_{1}$ and $P(1) = R_{2}$.
Now consider a reparametrization of $P$, denoted by $P^{(h)}$ and given by
\begin{equation*}
P^{(h)}(s) := P\bigg(\frac{s - s_{0}}{\omega(h)}\bigg).
\end{equation*}
Define the sequence $v^{(h)}: \Omega\rightarrow \mathbb{R}^3$ as
$$
v^{(h)}(s,\xi,\zeta) :=
\left\{
\begin{array}{lll}
R_{1}\Biggl(\begin{array}{c}
s\\
h\,\xi\\
h\,\zeta
\end{array}\Biggr)  & \mbox{on }  s\in [0,s_{0}[\times D,\\
\displaystyle\int_{s_{0}}^{s} \big(P^{(h)}\big)(\sigma)e_1\,d\sigma +  P^{(h)}(s)\Biggl(\begin{array}{c}
0\\
h\,\xi\\
h\,\zeta
\end{array}\Biggr) + b^{(h)}    & \mbox{on } \, \big[s_{0}, s_{0} + \omega(h)\big]\times D,\\
R_{2}\Biggl(\begin{array}{c}
s\\
h\,\xi\\
h\,\zeta
\end{array}\Biggr) + d^{(h)} & \mbox{on }  \, \big]s_{0} + \omega(h), L\big]\times D,\\
\end{array}
\right.
$$
where the constants $b^{(h)}$ and $d^{(h)}$ are chosen in order to make $v^{(h)}$ continuous.
It turns out that the scaled gradient has the following expression:
\begin{equation}\label{Gradve}
\nabla_{h}v^{(h)} = \left\{
\begin{array}{lll}
R_{1} & \mbox{on } \,  [0,s_{0}[\times D,\\
P^{(h)}(s) +  \Bigg(\big(P^{(h)}\big)'(s)\Bigg(\begin{array}{c}
0\\
h\,\xi\\
h\,\zeta
\end{array}\Bigg)\Bigg)\otimes e_1  & \mbox{on } \, \big[s_{0}, s_{0} + \omega(h)\big]\times D,\\
R_{2} & \mbox{on }  \, \big]s_{0} + \omega(h), L\big]\times D;\\
\end{array}
\right.
\end{equation}
moreover $\nabla_{h}v^{(h)}\rightarrow (v'\,|\,d_{2}\,|\,d_{3})$ strongly in $L^{2}(\Omega;\mathbb{R}^{3})$. In order to evaluate the functional on this sequence we use the fact that, by (v) and (\ref{convdet}),
\begin{equation}\label{estimdist}
\frac{1}{h^\alpha}\,J^{(h)}\big(v^{(h)}\big) \leq \frac{c}{h^\alpha}\, \int_{\Omega}\mbox{dist}^2\big(\nabla_{h}v^{(h)}\,
\big(\nabla_{h}\Psi^{(h)}\big)^{-1}, SO(3)\big)\,ds\,d\xi\,d\zeta.
\end{equation}
From (\ref{Gradve}) the integral on the right-hand side of the previous expression can be written as
\begin{align}\label{3int}
&\int_{0}^{s_0}\int_{D}\mbox{dist}^2\big(R_1\,
\big(\nabla_{h}\Psi^{(h)}\big)^{-1}, SO(3)\big)\,ds\,d\xi\,d\zeta +\,\int_{s_0 + \omega(h)}^{L}\int_{D}\mbox{dist}^2\big(R_2\,
\big(\nabla_{h}\Psi^{(h)}\big)^{-1}, SO(3)\big)\,ds\,d\xi\,d\zeta \nonumber\\ 
&+\int_{s_0}^{s_0 + \omega(h)}\int_{D}\mbox{dist}^2\big(\nabla_{h}v^{(h)}\,
\big(\nabla_{h}\Psi^{(h)}\big)^{-1}, SO(3)\big)\,ds\,d\xi\,d\zeta.
\end{align}
The first two terms in (\ref{3int}) give a contribution of order $h^2$ since, by (\ref{invA}), for $i= 1,2$,
\begin{align*}
\mbox{dist}^2\big(R_i\,\big(\nabla_{h}\Psi^{(h)}\big)^{-1}, SO(3)\big) &\leq h^2\,  \mbox{dist}^2\big(R_i\,R_{0}^{T}\,\big[(\xi\,\nu'_{2} +
\zeta\,\nu'_{3})\otimes e_1\big] R_{0}^{T}, SO(3)\big)\\
&\leq \,C\,h^2\, \mbox{dist}^2\big(\big[(\xi\,\nu'_{2} +
\zeta\,\nu'_{3})\otimes e_1\big], SO(3)\big),
\end{align*}
so they can be neglected in the computation of the limit of (\ref{estimdist}). The only term we have to analyse is the last integral in (\ref{3int}). Set 
$$A^{(h)}(s,\xi,\zeta):= \Bigg(\big(P^{(h)}\big)'\Bigg(\begin{array}{c}
0\\
h\,\xi\\
h\,\zeta
\end{array}\Bigg)\Bigg)\otimes e_1.$$
Using again (\ref{invA}) we have that
\begin{equation*}
\mbox{dist}^2\big(\nabla_{h}v^{(h)}\,\big(\nabla_{h}\Psi^{(h)}\big)^{-1}, SO(3)\big)\leq \,\mbox{dist}^2\,\big(A^{(h)}\,\big(\nabla_{h}\Psi^{(h)}\big)^{-1}, SO(3)\big)\leq \,C\,h^2\,\big(\xi^{2} + \zeta^{2}\big)\, \big|\,\big(P^{(h)}\big)'\,\big|^2,  
\end{equation*}
so we get the following estimate:
\begin{align*}
\int_{s_0}^{s_0 + \omega(h)}\int_{D}\mbox{dist}^2\big(\nabla_{h}v^{(h)}\,
\big(\nabla_{h}\Psi^{(h)}\big)^{-1}, SO(3)\big)\,ds\,d\xi\,d\zeta &\leq 
C\,h^2\,\int_{s_0}^{s_0 + \omega(h)}\,\big|\,\big(P^{(h)}\big)'\,\big|^2\,ds\\
&= \,C \,\frac{h^{2}}{\omega(h)}\int_{0}^{1} \big|\,P'\big|^{2} ds.
\end{align*}
Notice that, if we choose $\omega(h)\sim h^{\beta}$, with $0<\beta<2 - \alpha$, also this term can be neglected in (\ref{estimdist}), hence
\begin{equation*}
\lim_{h\rightarrow 0}\frac{1}{h^\alpha}\,J^{(h)}\big(v^{(h)}\big) = 0
\end{equation*}
and this concludes the proof in the case $(v'\,|\,d_{2}\,|\,d_{3})$ is a piecewise constant rotation. 

Consider now the general case. Since $(v'\,|\,d_{2}\,|\,d_{3}) \in Co(SO(3))$ \,a.e., there exists a sequence of piecewise constant rotations $R_j: [0, L]\longrightarrow SO(3)$ such that
$R_j \rightarrow (v'\,|\,d_{2}\,|\,d_{3})$ strongly in $L^{2}((0, L);\mathbb{M}^{3\times 3})$.
For each element $R_j$ of the sequence we can repeat the same construction done in the previous case and find a sequence 
$v^{(h)}_j$ whose scaled gradients $\nabla_{h}v^{(h)}_j$ converge to $R_j$ as $h\rightarrow 0$ and such that for every $j$
\begin{equation}
\lim_{h\rightarrow 0}\frac{1}{h^{\alpha}}\int_{\Omega}W(s,\nabla_{h} v_{j}^{(h)}\, \big(\nabla_{h}\Psi^{(h)}\big)^{-1}\big)\det \big(\nabla_{h}\Psi^{(h)}\big) ds\,d\xi\,d\zeta = 0.
\end{equation}
Now we can choose, for every $j$, an element of the sequence $v^{(h)}_j$, say $v^{(h_j)}_j$, in such a way that 
\begin{equation}
\norm{\nabla_{h_j}v^{(h_j)}_j - R_{j}}_{L^2(\Omega;\mathbb{M}^{3\times 3})} < \frac{1}{j}
\end{equation}
and
\begin{equation}
\frac{1}{h_j^{\alpha}}\int_{\Omega}W(s,\nabla_{h_j} v_{j}^{(h_j)}\, \big(\nabla_{h_j}\Psi^{(h_j)}\big)^{-1}\big)\det \big(\nabla_{h_j}\Psi^{(h_j)}\big) ds\,d\xi\,d\zeta < \frac{1}{j}.
\end{equation}
These estimates show that the sequence $v^{(h_j)}_j$ converges to $(v'\,|\,d_{2}\,|\,d_{3})$ strongly in 
$L^{2}((0, L);\mathbb{M}^{3\times 3})$ and that 
\begin{equation}
\lim_{j\rightarrow \infty}\frac{1}{h_j^{\alpha}}\int_{\Omega}W(s,\nabla_{h_j} v^{(h_j)}\, \big(\nabla_{h_j}\Psi^{(h_j)}\big)^{-1}\big)\det \big(\nabla_{h_j}\Psi^{(h_j)}\big) ds\,d\xi\,d\zeta = 0.
\end{equation}
This concludes the proof.
\end{proof}

%    RINGRAZIAMENTI---------- ------------------------------------------------
\bigskip
\bigskip
\centerline{\textsc{Acknowledgments}}
\bigskip
\noindent 
I would like to thank Maria Giovanna Mora for having proposed
to me the study of this problem and for many helpful and interesting
suggestions. I would like also to thank Gianni Dal Maso for several
stimulating discussions on the subject of this paper. 

\noindent 
This work is part of the project ``Calculus of Variations" 2004, 
supported by the Italian Ministry of Education, University, and Research.
\bigskip

%%%BIBLIOGRAFIA--------------------------------------------------------------------------------------------------------
%\nocite{*}
\addcontentsline{toc}{chapter}{References}

\end{document}